\documentclass[11pt,a4paper]{article}

\usepackage[utf8]{inputenc}
\usepackage[T1]{fontenc}
\usepackage{amsmath,amssymb,amsthm}
\usepackage{graphicx}
\graphicspath{{figures/}}
\usepackage{booktabs}
\usepackage{algorithm}
\usepackage{algorithmic}
\usepackage{hyperref}
\hypersetup{hidelinks}
\usepackage{xcolor}
\usepackage{tikz}
\usetikzlibrary{arrows.meta,positioning,shapes.geometric,fit,calc,decorations.pathreplacing}
\usepackage{subcaption}
\usepackage{enumitem}
\usepackage[margin=2.5cm]{geometry}
\usepackage{natbib}
\usepackage{tcolorbox}
\usepackage{multirow}
\usepackage{tabularx}
\usepackage{diagbox}

\newtheorem{definition}{Definition}

\newtheorem{proposition}{Proposition}

\newtheorem{corollary}{Corollary}

\newcommand{\adaptorch}{\textsc{AdaptOrch}}
\newcommand{\R}{\mathbb{R}}

\title{
\textbf{AdaptOrch: Task-Adaptive Multi-Agent Orchestration} \\
\textbf{in the Era of LLM Performance Convergence}
}

\author{
Geunbin Yu \\
Department of Artificial Intelligence, Korea National Open University \\
\texttt{ict03@rfems.com} \\
\small ORCID: \href{https://orcid.org/0009-0006-2879-9514}{0009-0006-2879-9514}
}

\date{February 2026}

\begin{document}
\maketitle

\begin{abstract}
As large language models (LLMs) from diverse providers converge toward comparable benchmark performance, the traditional paradigm of selecting a single best model per task yields diminishing returns.
We argue that \textbf{orchestration topology}---the structural composition of how multiple agents are coordinated, parallelized, and synthesized---now dominates system-level performance over individual model capability.
We present \adaptorch{}, a formal framework for \textbf{task-adaptive multi-agent orchestration} that dynamically selects among four canonical topologies (parallel, sequential, hierarchical, and hybrid) based on task dependency graphs and empirically derived domain characteristics.
Our framework introduces three key contributions:
(1) \textbf{Performance Convergence Scaling Law}, formalizing conditions under which orchestration selection outweighs model selection;
(2) \textbf{Topology Routing Algorithm} that maps task decomposition DAGs to optimal orchestration patterns in $O(|V| + |E|)$ time; and
(3) \textbf{Adaptive Synthesis Protocol} with provable termination guarantees and heuristic consistency scoring for parallel agent outputs.
We validate \adaptorch{} across coding (SWE-bench), reasoning (GPQA), and retrieval-augmented generation tasks, demonstrating that topology-aware orchestration achieves 12--23\% improvement over static single-topology baselines, even when using identical underlying models.
Our results establish orchestration design as a first-class optimization target independent of model scaling.
\end{abstract}

\textbf{Keywords:} multi-agent systems, LLM orchestration, task-adaptive routing, parallel agent execution, performance convergence

\section{Introduction}
\label{sec:intro}

The landscape of large language models in early 2026 presents a paradoxical challenge: as more models achieve near-identical benchmark scores, the marginal value of model selection diminishes while the complexity of choosing among them grows.
GPT-4o, Claude 3.5 Sonnet, Gemini 2.0, Llama 3.3 70B, DeepSeek-V3, and Qwen 2.5 72B now cluster within 2--5\% of each other on standard benchmarks including MMLU, HumanEval, and MATH \citep{openllmleaderboard2025}.
This \textbf{performance convergence} reshapes the optimization frontier.
When individual model capability plateaus, \emph{how} models are composed begins to dominate \emph{which} model is selected---a shift with far-reaching implications for system design.

Current orchestration approaches fall into two broad categories.
\textbf{Static frameworks}---Model Context Protocol (MCP) \citep{anthropic2024mcp}, LangGraph \citep{langgraph2024}, and CrewAI \citep{crewai2024}---define fixed execution topologies (chains, graphs, or role-based teams) that persist regardless of what the task demands.
A second category, \textbf{routing-based systems} like Mixture-of-Agents (MoA) \citep{wang2024moa} and LLM-Blender \citep{jiang2023llmblender}, dynamically selects or blends model outputs yet leaves the structural topology of agent coordination untouched.
A natural question emerges: \emph{given a specific task, what is the optimal topology for coordinating multiple agents?}

Recent practical advances illuminate this gap.
Both Claude Code's Agent Teams \citep{anthropic2026agentteams} and OpenCode's parallel subagent architecture \citep{opencode2025} show that parallel execution of specialized agents---each in its own context window, working on an independent subtask---can compress multi-hour sequential workflows into minutes.
What these systems still leave to the user, however, is the decomposition itself: deciding how to split the work and assign agent roles.
The \textbf{topology selection problem} remains unsolved at the algorithmic level.

\begin{figure}[t]
\centering
\begin{tikzpicture}[
    node distance=1.2cm and 1.5cm,
    every node/.style={font=\small},
    box/.style={draw, rounded corners, minimum width=2.2cm, minimum height=0.8cm, align=center, fill=#1!15},
    box/.default=blue,
    arrow/.style={-{Stealth[length=2mm]}, thick},
    darrow/.style={-{Stealth[length=2mm]}, thick, dashed},
]
\node[font=\bfseries\small] (era1) at (-3.5, 3.5) {Era 1: Model Selection};
\node[font=\bfseries\small] (era2) at (3.5, 3.5) {Era 2: Orchestration Design};

\node[box=gray] (task1) at (-3.5, 2.5) {Task};
\node[box=red] (m1) at (-5, 1.2) {Model A};
\node[box=red] (m2) at (-3.5, 1.2) {Model B};
\node[box=red] (m3) at (-2, 1.2) {Model C};
\node[box=gray] (sel) at (-3.5, 0) {Select Best};
\node[box=green] (out1) at (-3.5, -1.2) {Output};

\draw[arrow] (task1) -- (m1);
\draw[arrow] (task1) -- (m2);
\draw[arrow] (task1) -- (m3);
\draw[arrow] (m1) -- (sel);
\draw[arrow] (m2) -- (sel);
\draw[arrow] (m3) -- (sel);
\draw[arrow] (sel) -- (out1);

\node[box=gray] (task2) at (3.5, 2.5) {Task};
\node[box=orange] (decomp) at (3.5, 1.2) {Decompose \\ + Route};

\node[box=blue] (par) at (1.5, -0.2) {$\parallel$ Parallel};
\node[box=blue] (seq) at (3.5, -0.2) {$\rightarrow$ Sequential};
\node[box=blue] (hier) at (5.5, -0.2) {$\triangle$ Hierarchy};

\node[box=green] (synth) at (3.5, -1.5) {Synthesize};

\draw[arrow] (task2) -- (decomp);
\draw[arrow] (decomp) -- (par);
\draw[arrow] (decomp) -- (seq);
\draw[arrow] (decomp) -- (hier);
\draw[arrow] (par) -- (synth);
\draw[arrow] (seq) -- (synth);
\draw[arrow] (hier) -- (synth);

\draw[thick, dashed, gray] (0, -2) -- (0, 4);

\end{tikzpicture}
\caption{Paradigm shift from model selection (left) to orchestration design (right). When model capabilities converge, the dominant optimization variable becomes the structural topology of agent coordination.}
\label{fig:paradigm-shift}
\end{figure}

This paper introduces \adaptorch{}, a framework that formalizes and automates topology selection.
The central insight is straightforward: tasks decompose into dependency-annotated directed acyclic graphs (DAGs), and structural properties of these DAGs---parallelism width, critical path depth, inter-subtask coupling---turn out to predict the optimal orchestration topology with high accuracy.

We make four contributions:
\begin{enumerate}[leftmargin=*]
\item \textbf{Performance Convergence Scaling Law} (Section~\ref{sec:theory}): We show that under $\epsilon$-convergence of model capabilities, the variance in system performance attributable to orchestration topology exceeds that of model selection by a factor of $\Omega(1/\epsilon^2)$, establishing topology selection as the dominant optimization target as models converge.
\item \textbf{Topology Routing Algorithm} (Section~\ref{sec:framework}): A linear-time algorithm that analyzes task dependency DAGs and routes to one of four canonical topologies: parallel, sequential, hierarchical, or hybrid.
\item \textbf{Adaptive Synthesis Protocol} (Section~\ref{sec:synthesis}): A protocol for reconciling outputs from parallel agents with provable termination guarantees via adaptive re-routing and heuristic consistency scoring based on embedding similarity.
\item \textbf{Empirical validation} (Section~\ref{sec:experiments}): Experiments across three domains showing 12--23\% improvement over static baselines using identical models.
\end{enumerate}

\section{Related Work}
\label{sec:related}

\subsection{LLM Performance Convergence}

Multiple benchmark suites now document the convergence of LLM capabilities across providers.
The Open LLM Leaderboard v2 \citep{openllmleaderboard2025} shows top-10 models clustering within a 3-point MMLU range (87.2--90.1) as of January 2026.
In a striking finding, \citet{sato2025selfmoa} demonstrated that Self-MoA---a single top model queried multiple times---outperforms diverse model mixing by 6.6\% on AlpacaEval 2.0, undermining the assumption that model diversity inherently improves performance.
Chatbot Arena \citep{zheng2024chatbotarena} ELO rankings tell a similar story: frontier models from OpenAI, Anthropic, Google, Meta, and Alibaba now occupy overlapping confidence intervals on general-purpose tasks.
Taken together, these results suggest that when models become increasingly interchangeable, the orchestration structure emerges as the primary lever for performance gains.

\subsection{Static Orchestration Frameworks}

Model Context Protocol (MCP) \citep{anthropic2024mcp} standardizes tool-model interfaces but prescribes no topology for multi-agent coordination.
LangGraph \citep{langgraph2024} goes further, modeling workflows as directed graphs with parallel branches, conditional edges, and stateful execution---yet the topology must be designed manually.
CrewAI \citep{crewai2024} takes a role-based approach, assigning agents fixed personas (e.g., researcher, writer, reviewer) in predetermined interaction patterns, while AutoGen \citep{wu2023autogen} supports multi-agent conversation but defaults to sequential round-robin communication.
The common thread: none of these frameworks \emph{adapt} their topology based on the task at hand.

\subsection{Dynamic Model Composition}

Mixture-of-Agents \citep{wang2024moa} arranges models in layered pipelines where each layer refines previous outputs, achieving 65.1\% on AlpacaEval 2.0 versus 57.5\% for the best individual model.
LLM-Blender \citep{jiang2023llmblender} uses a PairRanker to select among candidate outputs.
DEI \citep{zhang2024dei} employs multi-agent committees for SWE-bench Lite, where the best-performing group achieves a 55\% resolve rate versus 27.3\% for the strongest individual open-source agent.
However, all these systems use \emph{fixed} topologies (layered pipeline, output selection, or flat committee) regardless of task structure.
To our knowledge, no prior work formalizes topology selection as an explicit function of task dependency structure, which is the gap we address.

\subsection{Parallel Agent Execution in Practice}

Claude Code Agent Teams \citep{anthropic2026agentteams} and the Superpowers framework \citep{superpowers2026} demonstrate practical parallel execution with lead-agent orchestration, DAG-based task dependencies, and inbox-based inter-agent communication.
OpenCode \citep{opencode2025} supports multi-provider agent routing with explicit permission-controlled subagent architectures.
\citet{drammeh2025myantfarm} showed that multi-agent orchestration achieves 100\% actionable recommendation rate versus 1.7\% for single-agent approaches in incident response, with zero quality variance across 348 trials.
These practical systems validate the performance potential of orchestrated multi-agent execution but lack formal frameworks for topology optimization.

\subsection{Concurrent and Recent Work}

Several concurrent efforts address related aspects of dynamic multi-agent orchestration.
DyTopo \citep{lu2026dytopo} optimizes agent communication topology via semantic matching between agent capabilities and subtask requirements; unlike our approach, their routing operates at the agent-pair level rather than selecting among canonical structural patterns, which limits interpretability of the chosen topology.
MetaGen \citep{wang2026metagen} co-evolves agent roles and topologies through self-play, achieving impressive adaptation but sacrificing the predictability that our closed-form routing algorithm provides.
ALMC \citep{almc2026} introduces Manager-Judge-Optimizer role separation with adaptive collaboration, though their role-based decomposition differs fundamentally from our DAG-structure-based routing and does not provide explicit cost control.
MoMA \citep{guo2025moma} generalizes routing across both models and agents, treating the choice of orchestration strategy as a bandit problem; our work instead exploits task structure directly through DAG analysis, avoiding the sample complexity of online learning.
S-DAG \citep{dong2025sdag}, accepted at AAAI 2026, decomposes tasks into subject-based DAGs for multi-agent allocation---the closest work to ours in spirit, though their subjects correspond to semantic domains while our DAG nodes represent subtask dependencies with explicit coupling annotations.
ORCH \citep{vinay2026orch} proposes a deterministic multi-agent protocol with fixed execution guarantees; our framework complements this by adding adaptive topology selection on top of deterministic execution primitives.

Our work is distinguished by the combination of (i) formal topology routing grounded in DAG structural properties, (ii) provable termination guarantees for the synthesis protocol, and (iii) explicit cost-accuracy Pareto analysis---elements that no single prior system integrates.

\section{Problem Formalization}
\label{sec:theory}

\subsection{Model Convergence}

\begin{definition}[$\epsilon$-Convergence]
\label{def:convergence}
A set of $n$ models $\mathcal{M} = \{M_1, \ldots, M_n\}$ is \textbf{$\epsilon$-convergent} on benchmark $\mathcal{B}$ if:
\begin{equation}
\max_{i,j \in [n]} |S_{\mathcal{B}}(M_i) - S_{\mathcal{B}}(M_j)| \leq \epsilon
\end{equation}
where $S_{\mathcal{B}}(M_i)$ denotes the score of model $M_i$ on benchmark $\mathcal{B}$, normalized to $[0, 1]$.
\end{definition}

For current frontier models on MMLU, $\epsilon \approx 0.03$; on HumanEval, $\epsilon \approx 0.05$.

\subsection{Task Dependency Graphs}

\begin{definition}[Task Dependency DAG]
\label{def:dag}
A task $T$ decomposes into a directed acyclic graph $G_T = (V, E, w, c)$ where:
\begin{itemize}[leftmargin=*,nosep]
    \item $V = \{v_1, \ldots, v_k\}$ is the set of subtasks
    \item $E \subseteq V \times V$ encodes dependencies ($(v_i, v_j) \in E$ means $v_i$ must complete before $v_j$ starts)
    \item $w: V \to \R^+$ assigns estimated computational cost to each subtask
    \item $c: E \to [0,1]$ assigns coupling strength between dependent subtasks (degree of context sharing required)
\end{itemize}
\end{definition}

\begin{definition}[DAG Structural Properties]
\label{def:dag-properties}
For a task DAG $G_T = (V, E, w, c)$, we define:
\begin{align}
\text{Parallelism Width: } \quad \omega(G_T) &= \max_{A \subseteq V} |A| \text{ s.t. } A \text{ is an antichain in } G_T \label{eq:width} \\
\text{Critical Path Depth: } \quad \delta(G_T) &= \max_{\text{path } P} \sum_{v \in P} w(v) \label{eq:depth} \\
\text{Coupling Density: } \quad \gamma(G_T) &= \frac{\sum_{(u,v) \in E} c(u,v)}{|E|} \label{eq:coupling}
\end{align}
\end{definition}

\subsection{Orchestration Topologies}

We define four canonical topologies $\mathcal{T} = \{\tau_P, \tau_S, \tau_H, \tau_X\}$:

\begin{definition}[Canonical Topologies]
\label{def:topologies}
\begin{align}
\tau_P &: \text{\textbf{Parallel}} - \text{All subtasks execute concurrently; outputs merged post-hoc} \\
\tau_S &: \text{\textbf{Sequential}} - \text{Subtasks execute in topological order; each receives prior context} \\
\tau_H &: \text{\textbf{Hierarchical}} - \text{Lead agent decomposes and delegates; sub-agents report back} \\
\tau_X &: \text{\textbf{Hybrid}} - \text{DAG partitioned into parallel groups connected sequentially}
\end{align}
\end{definition}

Each topology $\tau$ induces a scheduling function $\sigma_\tau: G_T \to \text{ExecutionPlan}$ that maps the task DAG to a concrete execution ordering with agent assignments.

\subsection{Performance Convergence Scaling Law}

\begin{proposition}[Orchestration Dominance under Convergence]
\label{thm:main}
Let $\mathcal{M}$ be $\epsilon$-convergent on task distribution $\mathcal{D}$.
Let $\text{Var}_M$ denote performance variance from model selection and $\text{Var}_\tau$ denote performance variance from topology selection.
For a task $T$ with dependency DAG $G_T$ having $k$ subtasks, under uniform subtask weights, Lipschitz aggregation ($L_f \leq 1$), and a topology quality coefficient $C_\tau \geq 1/(4k)$:
\begin{equation}
\frac{\text{Var}_\tau}{\text{Var}_M} \geq \frac{(\omega(G_T) - 1)^2}{4\epsilon^2 \cdot k} \cdot \left(1 - \gamma(G_T)\right)^2
\end{equation}
When $\epsilon \to 0$ (perfect convergence) and $\omega(G_T) > 1$ (parallelizable tasks), $\text{Var}_\tau / \text{Var}_M \to \infty$.
\end{proposition}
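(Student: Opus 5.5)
The plan is to bound the two variances separately and then take their ratio. The numerator needs a lower bound and the denominator an upper bound. Fix the decomposition $G_T$ with $k$ subtasks of uniform weight $w\equiv 1/k$. Write system performance as $\Phi(M,\tau) = f(q_1,\dots,q_k)$, where $q_i$ is the quality of subtask $v_i$ under model $M$ and topology $\tau$, and $f$ is the aggregation map with Lipschitz constant $L_f \le 1$.

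\textbf{Denominator.} With $\tau$ held fixed, changing the model moves each $q_i$ by at most $\epsilon$, by Definition~\ref{def:convergence} applied to the task distribution $\mathcal{D}$. Under uniform weights, Lipschitz aggregation then moves $\Phi$ by at most $L_f\epsilon \le \epsilon$. Any random variable whose range has width at most $\epsilon$ has variance at most $\epsilon^2/4$ (Popoviciu's inequality). Hence $\text{Var}_M \le \epsilon^2/4$. This step is routine.

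\textbf{Numerator.} With $M$ held fixed, I would compare two topologies that both appear in the support of the topology choice, namely $\tau_P$ and $\tau_S$. The gap between them should come from the independent subtasks. Take a maximum antichain $A$ with $|A| = \omega(G_T)$ (Definition~\ref{def:dag-properties}). Under $\tau_S$, the $\omega-1$ non-first members of $A$ are forced to wait and to carry context that does not help them. Under $\tau_P$, they run in isolation, which helps only to the extent that they are not coupled to the rest of the graph, a fraction $(1-\gamma(G_T))$.

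I would formalize this through the topology quality coefficient. Each decoupled, parallelizable subtask gains at least $C_\tau$ per unit weight, so the performance gap is
\[
\Delta \;\ge\; C_\tau\,(\omega(G_T)-1)\,(1-\gamma(G_T)).
\]
The maximum over the two choices of $\tau$ then gives $\text{Var}_\tau \ge \Delta^2/4$ in the least favourable case of a two-point distribution. Using $C_\tau \ge 1/(4k)$, one factor of $C_\tau$ could in principle be absorbed so that only $1/k$ remains after rescaling, but this is delicate. Combining the two bounds gives
\[
\frac{\text{Var}_\tau}{\text{Var}_M} \;\ge\; \frac{\Delta^2/4}{\epsilon^2/4} \;\ge\; \frac{(\omega-1)^2(1-\gamma)^2}{4\epsilon^2 k}
\]
once the constants are matched.

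\textbf{Main obstacle.} The constants do not come out automatically. A naive square gives $C_\tau^2 \ge 1/(16k^2)$, not $1/(4k)$. To reach the stated $1/(4k)$ I would instead normalize the per-subtask gain as $\sqrt{C_\tau/k}$, so that $\Delta^2 \ge C_\tau(\omega-1)^2(1-\gamma)^2/k$. This treats $C_\tau$ as a variance-level quality coefficient and not an amplitude, and I would state that interpretation explicitly. The coupling factor also depends on the assumption that the coupling of the antichain members is no worse than the average $\gamma$. I would make this an explicit modeling hypothesis on $C_\tau$ rather than derive it.

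The limiting statement follows immediately. With $\omega > 1$ and $\gamma < 1$ the numerator is a positive constant, and the denominator $4\epsilon^2 k \to 0$ as $\epsilon \to 0$, so the ratio diverges.
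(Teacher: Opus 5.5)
\textbf{The constant matching fails by a factor of $k$.} Your skeleton is the paper's. You bound $\text{Var}_M$ above using per-subtask $\epsilon$-closeness and Lipschitz aggregation. You bound $\text{Var}_\tau$ below from an assumed $\tau_P$-versus-$\tau_S$ gap $\Delta \ge C_\tau(\omega-1)(1-\gamma)$, which is the paper's Assumption~1 and also covers your hypothesis on antichain coupling. Then you take the ratio.

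The gap is where you flagged it, and your repair does not close it. With your two bounds the ratio is at least $(\Delta^2/4)/(\epsilon^2/4)=\Delta^2/\epsilon^2$, so you need $\Delta^2 \ge (\omega-1)^2(1-\gamma)^2/(4k)$. Your normalization $\Delta \ge \sqrt{C_\tau/k}\,(\omega-1)(1-\gamma)$, combined with $C_\tau \ge 1/(4k)$, yields only $\Delta^2 \ge (\omega-1)^2(1-\gamma)^2/(4k^2)$. That gives a ratio bound of $(\omega-1)^2(1-\gamma)^2/(4\epsilon^2 k^2)$, a factor $k$ too weak.

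The fix is to drop the extra $1/k$. Under the variance-level reading $\Delta^2 \ge C_\tau(\omega-1)^2(1-\gamma)^2$, the hypothesis $C_\tau \ge 1/(4k)$ gives the stated inequality exactly. Equivalently, under the amplitude reading the condition you actually need is $C_\tau^2 \ge 1/(4k)$, which holds for every $k$ at the paper's value $C_\tau = 1/2$. This works only because your Popoviciu bound $\text{Var}_M \le \epsilon^2/4$ is four times sharper than the paper's $\text{Var}_M \le \epsilon^2$. With the paper's bound, the variance-level reading would need $C_\tau \ge 1/k$.

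The paper's appendix proof does not use $C_\tau \ge 1/(4k)$ either. It fixes $C_\tau = 1/2$, obtains $\text{Var}_\tau \ge (\omega-1)^2(1-\gamma)^2/16$, and then writes $(\omega-1)^2(1-\gamma)^2/(4k)$, attributing the $1/k$ to an unexplained ``normalization.'' That step is valid only for $k \ge 4$. So your trouble with constants reflects a defect in the statement itself, and your sharper denominator is what lets it go through.

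\textbf{Two other steps must be stated as modeling hypotheses, not derived.}

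First, $\text{Var}_\tau \ge \Delta^2/4$ is not a ``least favourable'' bound. By the same Popoviciu inequality you use in the denominator, $\Delta^2/4$ is the \emph{largest} variance a distribution of range $\Delta$ can have. It holds, with equality, only if the topology choice is a fair coin between $\tau_P$ and $\tau_S$. A $(p,1-p)$ split gives $p(1-p)\Delta^2$, and a uniform choice over all four canonical topologies guarantees only $\Delta^2/8$. The paper makes the same fair binary-choice assumption implicitly.

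Second, ``moves $\Phi$ by at most $L_f\epsilon$'' requires $f$ to be $1$-Lipschitz in the sup norm, e.g.\ a weighted average. Euclidean Lipschitzness would cost a factor $\sqrt{k}$. Transferring benchmark-level $\epsilon$-convergence to per-subtask closeness is also an assumption. The paper glosses both with its ``correlated bound'' remark, so stating them explicitly, together with $\gamma(G_T) < 1$ for the limit, improves on the paper.
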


\begin{proof}[Proof sketch]
Model selection variance is bounded by $\text{Var}_M \leq \epsilon^2$ from Definition~\ref{def:convergence}, using the correlated bound (all subtasks share the same model).
Topology variance derives from the execution time ratio between worst-case (fully sequential: $\sum_v w(v)$) and best-case (maximally parallel: $\delta(G_T)$) schedules.
By Dilworth's theorem, the minimum number of chains covering $G_T$ equals the maximum antichain width $\omega(G_T)$.
The speedup ratio $\sum_v w(v) / \delta(G_T) \geq \omega(G_T)$ when subtask weights are uniform.
Coupling density $\gamma$ reduces effective parallelism by introducing synchronization overhead proportional to $\gamma^2$.
Combining bounds yields the stated ratio.
Full proof in Appendix~A.
\end{proof}

\begin{corollary}
For coding tasks (typical $\omega \geq 3$, $\gamma \leq 0.4$, $k \leq 6$, $\epsilon \approx 0.05$), the variance ratio satisfies $\text{Var}_\tau / \text{Var}_M \geq 20$, indicating that orchestration topology is the dominant performance factor over model selection.
\end{corollary}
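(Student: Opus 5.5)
The plan is to read the Corollary as a direct numerical instantiation of Proposition~\ref{thm:main}. No new structural argument about $G_T$ is needed. Everything reduces to a monotonicity check on the right-hand side
\[
R(\omega,\gamma,k,\epsilon) \;=\; \frac{(\omega-1)^2\,(1-\gamma)^2}{4\epsilon^2 k},
\]
followed by evaluation at the extreme admissible parameter values. First I will state explicitly that the Corollary inherits the hypotheses of Proposition~\ref{thm:main}: uniform subtask weights, Lipschitz aggregation with $L_f \le 1$, $C_\tau \ge 1/(4k)$, and $\epsilon$-convergence of $\mathcal{M}$. It therefore suffices to show $R \ge 20$ over the whole parameter region $\omega \ge 3$, $0 \le \gamma \le 0.4$, $1 \le k \le 6$, $\epsilon \le \epsilon_0$.

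The monotonicity step is elementary. On the region $\omega \ge 1$, $R$ is nondecreasing in $\omega$. On the region $\gamma \le 1$, it is nonincreasing in $\gamma$, since $(1-\gamma)^2$ decreases for $\gamma\in[0,1]$. It is decreasing in $k>0$ and decreasing in $\epsilon>0$. Hence the minimum of $R$ over the region is attained at the corner $\omega=3$, $\gamma=0.4$, $k=6$, $\epsilon=\epsilon_0$. With $\epsilon_0=0.05$ this gives
\[
R \;\ge\; \frac{(3-1)^2\,(0.6)^2}{4\,(0.05)^2\cdot 6} \;=\; \frac{4 \cdot 0.36}{0.06} \;=\; 24 \;\ge\; 20 .
\]
Applying Proposition~\ref{thm:main} then yields $\text{Var}_\tau/\text{Var}_M \ge 24 \ge 20$.

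The only genuine obstacle is interpretive: the hypothesis ``$\epsilon \approx 0.05$'' is not an inequality. I will resolve it by using the slack between $24$ and $20$. Requiring $R \ge 20$ at the worst corner gives $\epsilon^2 \le 4\cdot 0.36/(4\cdot 6\cdot 20) = 0.003$, i.e.\ $\epsilon \le \sqrt{0.003} \approx 0.0548$. So the Corollary holds for every $\epsilon$ in a neighbourhood of $0.05$ with $\epsilon \lesssim 0.054$, which is how I will make ``$\approx$'' precise.

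A secondary check is whether the condition $C_\tau \ge 1/(4k)$ is compatible with the parameters. With $k \le 6$ it only requires $C_\tau \ge 1/24$ in the worst case, which I will note is the regime assumed for coding tasks. One must also have $\omega \le k$, and this is consistent since $3 \le 6$.
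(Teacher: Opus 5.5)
Your proposal is correct, and at its core it is the paper's argument: the corollary is a numerical instantiation of the bound in Proposition~\ref{thm:main}, which is taken as a black box. The difference is where the bound is evaluated.

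At the end of Appendix~\ref{app:proof}, the paper plugs in \emph{typical} point values $\omega\approx 3.4$, $\gamma\approx 0.35$, $k\approx 5$, $\epsilon\approx 0.05$ and gets $\approx 48.7$. This shows representative coding DAGs clear the threshold comfortably. Read literally, though, it does not verify the corollary, whose hypotheses are inequalities rather than point values.

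Your monotonicity step, followed by evaluation at the worst corner $(\omega,\gamma,k,\epsilon)=(3,0.4,6,0.05)$, gives exactly $24$. That is what actually covers the whole region $\omega\ge 3$, $\gamma\le 0.4$, $k\le 6$. It also shows the constant $20$ is a rounded-down worst case, not a typical value. Your cutoff $\epsilon\le\sqrt{0.003}\approx 0.0548$ gives ``$\epsilon\approx 0.05$'' a precise meaning, which the paper never supplies.

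One small slip in your side remark concerns the inherited hypothesis $C_\tau\ge 1/(4k)$. Since $1/(4k)$ decreases in $k$ and $3\le\omega\le k\le 6$, this hypothesis is most demanding at $k=3$. A condition valid uniformly over the region is therefore $C_\tau\ge 1/12$; $C_\tau\ge 1/24$ only suffices at $k=6$. This does not affect your main computation.
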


\section{The \adaptorch{} Framework}
\label{sec:framework}

\begin{figure}[t]
\centering
\begin{tikzpicture}[
    node distance=0.8cm and 0.6cm,
    every node/.style={font=\small},
    module/.style={draw, rounded corners, minimum width=3cm, minimum height=1cm, align=center, fill=#1!12, thick},
    module/.default=blue,
    arrow/.style={-{Stealth[length=2.5mm]}, thick},
    label/.style={font=\scriptsize\itshape, text=gray},
]

\node[module=gray] (input) {Input Task $T$};

\node[module=orange, below=1cm of input] (decomp) {Task Decomposer \\ \footnotesize LLM-based subtask extraction};

\node[module=red, below=1cm of decomp] (dag) {DAG Constructor \\ \footnotesize Dependency \& coupling analysis};

\node[module=purple, below=1cm of dag] (router) {Topology Router \\ \footnotesize Algorithm~\ref{alg:routing}};

\node[module=blue, below left=1.2cm and 2cm of router] (par) {$\tau_P$: Parallel \\ Executor};
\node[module=blue, below left=1.2cm and 0cm of router] (seq) {$\tau_S$: Sequential \\ Executor};
\node[module=blue, below right=1.2cm and 0cm of router] (hier) {$\tau_H$: Hierarchical \\ Executor};
\node[module=blue, below right=1.2cm and 2cm of router] (hyb) {$\tau_X$: Hybrid \\ Executor};

\node[module=green, below=3.5cm of router] (synth) {Adaptive Synthesizer \\ \footnotesize Consistency verification + merge};

\node[module=gray, below=1cm of synth] (output) {Final Output};

\draw[arrow] (input) -- (decomp);
\draw[arrow] (decomp) -- (dag) node[midway, right, label] {$\{v_1, \ldots, v_k\}$};
\draw[arrow] (dag) -- (router) node[midway, right, label] {$G_T = (V, E, w, c)$};

\draw[arrow] (router) -- (par);
\draw[arrow] (router) -- (seq);
\draw[arrow] (router) -- (hier);
\draw[arrow] (router) -- (hyb);

\draw[arrow] (par) -- (synth);
\draw[arrow] (seq) -- (synth);
\draw[arrow] (hier) -- (synth);
\draw[arrow] (hyb) -- (synth);

\draw[arrow] (synth) -- (output);

\draw[arrow, dashed, red!60] (synth.east) -- ++(1.5, 0) |- (router.east) node[pos=0.25, right, label, text=red!60] {Retry on failure};

\end{tikzpicture}
\caption{\adaptorch{} pipeline. The Topology Router (Algorithm~\ref{alg:routing}) selects the optimal execution topology based on DAG structural properties ($\omega$, $\delta$, $\gamma$). Failed syntheses trigger re-routing with adjusted coupling estimates.}
\label{fig:pipeline}
\end{figure}

\adaptorch{} operates in five phases: task decomposition, DAG construction, topology routing, parallel/sequential execution, and adaptive synthesis (Figure~\ref{fig:pipeline}).

\subsection{Phase 1: Task Decomposition}

Given input task $T$, a decomposer agent $A_{\text{decomp}}$ extracts subtasks:
\begin{equation}
A_{\text{decomp}}(T) \to \{(v_i, d_i, w_i)\}_{i=1}^{k}
\end{equation}
where $v_i$ is the subtask identifier, $d_i$ is its natural language description, and $w_i$ is the estimated token cost.
The decomposer is prompted with domain-specific decomposition strategies:

\begin{tcolorbox}[colback=gray!5, colframe=gray!50, title=Decomposition Prompt Template]
\small
\texttt{Analyze the following task and decompose it into independent subtasks.} \\
\texttt{For each subtask, specify:} \\
\texttt{1. A unique identifier and description} \\
\texttt{2. Required inputs from other subtasks (dependencies)} \\
\texttt{3. Estimated complexity (tokens: low/medium/high)} \\
\texttt{4. Context coupling with dependencies (none/weak/strong/critical)} \\
\texttt{Task: \{T\}}
\end{tcolorbox}

\subsection{Phase 2: DAG Construction}

The decomposer output is parsed into a formal DAG $G_T = (V, E, w, c)$.
Dependency edges are inferred from explicit ``required inputs'' declarations.
Coupling strength $c(u, v)$ is estimated based on declared context requirements:

\begin{equation}
c(u, v) = \begin{cases}
0.0 & \text{if coupling = none (outputs fully independent)} \\
0.3 & \text{if coupling = weak (shared context helpful but not required)} \\
0.7 & \text{if coupling = strong (output of $u$ is direct input to $v$)} \\
1.0 & \text{if coupling = critical (semantic coherence required)}
\end{cases}
\end{equation}

DAG validity is verified: acyclicity check via topological sort ($O(|V| + |E|)$), connected component analysis, and critical path computation.

\subsection{Phase 3: Topology Routing}
\label{sec:routing}

The routing algorithm maps DAG structural properties to the optimal topology:

\begin{algorithm}[t]
\caption{Topology Routing Algorithm}
\label{alg:routing}
\begin{algorithmic}[1]
\REQUIRE Task DAG $G_T = (V, E, w, c)$, thresholds $\theta_\omega, \theta_\gamma, \theta_\delta$
\ENSURE Topology $\tau^* \in \{\tau_P, \tau_S, \tau_H, \tau_X\}$
\STATE Compute $\omega(G_T)$, $\delta(G_T)$, $\gamma(G_T)$ \COMMENT{Definition~\ref{def:dag-properties}}
\STATE Compute $r \gets \omega(G_T) / |V|$ \COMMENT{Parallelism ratio}
\IF{$|E| = 0$}
    \STATE \textbf{return} $\tau_P$ \COMMENT{Fully parallel}
\ELSIF{$\omega(G_T) = 1$}
    \STATE \textbf{return} $\tau_S$ \COMMENT{Fully sequential}
\ELSIF{$\gamma(G_T) > \theta_\gamma$ \AND $|V| > \theta_\delta$}
    \STATE \textbf{return} $\tau_H$ \COMMENT{High coupling + many subtasks}
\ELSIF{$r > \theta_\omega$ \AND $\gamma(G_T) \leq \theta_\gamma$}
    \STATE \textbf{return} $\tau_P$ \COMMENT{Wide DAG, low coupling}
\ELSE
    \STATE Partition $G_T$ into stages $S_1, \ldots, S_m$ via topological layering
    \STATE \textbf{return} $\tau_X(S_1, \ldots, S_m)$ \COMMENT{Hybrid topology}
\ENDIF
\end{algorithmic}
\end{algorithm}

Default thresholds: $\theta_\omega = 0.5$ (at least half the subtasks parallelizable), $\theta_\gamma = 0.6$ (high coupling threshold), $\theta_\delta = 5$ (minimum subtasks for hierarchical).
These are empirically calibrated in Section~\ref{sec:experiments}.

\textbf{Complexity:} The routing decision (Algorithm~\ref{alg:routing}, lines 3--11) runs in $O(|V| + |E|)$: critical path $\delta(G_T)$ via longest-path DP on the DAG, coupling density $\gamma$ via edge traversal, and topological layering for hybrid partitioning.
The antichain width $\omega(G_T)$ computation requires separate analysis: an \emph{approximate} $\omega$ via layer-width (maximum layer size in topological ordering) runs in $O(|V| + |E|)$ and suffices for routing; the \emph{exact} $\omega$ via K\"onig's theorem on the transitive closure requires $O(|V|^{2.5})$ matching and is used only for offline calibration.

\subsection{Phase 4: Topology-Specific Execution}
\label{sec:execution}

Each topology implements a distinct execution strategy:

\subsubsection{Parallel Executor ($\tau_P$)}
All subtasks dispatch simultaneously to separate agent instances, each with isolated context windows:
\begin{equation}
\forall v_i \in V: \quad \text{output}_i = A_i(d_i, \text{context}_{\text{global}}) \quad \text{[concurrent]}
\end{equation}
Agent assignment uses round-robin across available model instances.
This mirrors the architecture of Claude Code Agent Teams, where each subagent receives task-specific instructions plus minimal shared context.

\subsubsection{Sequential Executor ($\tau_S$)}
Subtasks execute in topological order, with each agent receiving the accumulated context of all predecessors:
\begin{equation}
\text{output}_i = A_i\left(d_i, \text{context}_{\text{global}}, \bigoplus_{(v_j, v_i) \in E} \text{output}_j\right)
\end{equation}
where $\bigoplus$ denotes context concatenation with relevance-weighted truncation to fit context windows.

\subsubsection{Hierarchical Executor ($\tau_H$)}
A lead agent $A_{\text{lead}}$ orchestrates sub-agents, maintaining a global task list with DAG-based dependency tracking:
\begin{align}
&A_{\text{lead}}: \text{decompose} \to \text{assign} \to \text{monitor} \to \text{reconcile} \\
&A_{\text{sub},i}: \text{receive}(d_i) \to \text{execute} \to \text{report}(A_{\text{lead}})
\end{align}
The lead agent resolves conflicts when sub-agent outputs are inconsistent, analogous to Claude Code's lead-agent pattern with inbox-based communication.

\subsubsection{Hybrid Executor ($\tau_X$)}
The DAG is partitioned into topological layers $S_1, \ldots, S_m$.
Within each layer, subtasks execute in parallel; between layers, execution is sequential:
\begin{equation}
\text{For layer } S_l: \quad \forall v_i \in S_l: \text{output}_i = A_i\left(d_i, \bigoplus_{v_j \in \bigcup_{l'<l} S_{l'}} \text{output}_j\right) \quad \text{[concurrent within } S_l\text{]}
\end{equation}

\subsection{Phase 5: Adaptive Synthesis Protocol}
\label{sec:synthesis}

The synthesizer merges outputs from the selected topology into a coherent final result.

\begin{definition}[Consistency Score (Heuristic)]
\label{def:consistency}
For outputs $\{o_1, \ldots, o_k\}$ from $k$ subtasks, the consistency score is a \emph{heuristic} measure of semantic agreement:
\begin{equation}
\text{CS}(o_1, \ldots, o_k) = \frac{1}{\binom{k}{2}} \sum_{i < j} \text{sim}(o_i \cap o_j, o_i \cup o_j)
\end{equation}
where $\text{sim}$ measures semantic overlap via embedding cosine similarity on shared output dimensions.
Note that CS captures semantic similarity rather than logical consistency; it serves as a practical proxy for detecting contradictory outputs but does not guarantee formal logical coherence.
\end{definition}

\begin{algorithm}[t]
\caption{Adaptive Synthesis Protocol}
\label{alg:synthesis}
\begin{algorithmic}[1]
\REQUIRE Outputs $\{o_1, \ldots, o_k\}$, topology $\tau$, consistency threshold $\theta_{\text{CS}}$
\ENSURE Synthesized output $O$
\STATE Compute $\text{CS}(o_1, \ldots, o_k)$
\IF{$\tau = \tau_S$}
    \STATE \textbf{return} $o_k$ \COMMENT{Sequential: last output is final}
\ELSIF{$\text{CS} \geq \theta_{\text{CS}}$}
    \STATE $O \gets A_{\text{merge}}(\text{``Synthesize these consistent outputs: ''} \| o_1 \| \cdots \| o_k)$
    \STATE \textbf{return} $O$ \COMMENT{Consistent parallel outputs}
\ELSE
    \STATE $O \gets A_{\text{arbiter}}(\text{``Resolve conflicts among: ''} \| o_1 \| \cdots \| o_k)$
    \IF{$\text{CS}(O) < \theta_{\text{CS}}$}
        \STATE Re-route via Algorithm~\ref{alg:routing} with $\gamma' = \gamma + 0.2$ \COMMENT{Increase coupling}
    \ENDIF
    \STATE \textbf{return} $O$ \COMMENT{Inconsistent: escalated}
\ENDIF
\end{algorithmic}
\end{algorithm}

\begin{proposition}[Synthesis Termination]
\label{prop:convergence}
Under the adaptive re-routing mechanism (Algorithm~\ref{alg:synthesis}, line 8), the synthesis protocol terminates within at most $\lceil (1 - \gamma_0) / 0.2 \rceil \leq 5$ iterations.
As $\gamma$ increases by 0.2 per retry, after at most 5 iterations $\gamma > \theta_\gamma$ forces hierarchical routing ($\tau_H$), which uses a single arbiter agent, guaranteeing termination.
Empirically, convergence occurs in $\leq 2$ iterations for 94\% of tasks (Section~\ref{sec:experiments}).
\end{proposition}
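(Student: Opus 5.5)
The plan is a potential-function argument on the coupling estimate $\gamma$. Each pass through Algorithm~\ref{alg:synthesis} either returns directly (lines 3 and 6) or reaches the arbiter branch. Only the arbiter branch can trigger another round, and it does so only through the re-routing call at line 8, which replaces $\gamma$ by $\gamma' = \gamma + 0.2$. I would therefore model an execution as a sequence of rounds $t = 0, 1, 2, \ldots$ with coupling values $\gamma_t = \gamma_0 + 0.2t$. I would then show two things: the sequence cannot continue once $\gamma_t$ leaves the range where a re-route is meaningful, and on reaching that range the protocol returns.

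First, I would make precise that $\gamma$ lives in $[0,1]$. By Definition~\ref{def:dag-properties} it is an average of values $c(u,v) \in [0,1]$. I would adopt the convention that the adjusted estimate is clipped, $\gamma' = \min(1, \gamma + 0.2)$, and that once $\gamma = 1$ no further upward adjustment is possible, so the arbiter output is returned as final. With this convention the potential $\Phi_t = \lceil (1-\gamma_t)/0.2 \rceil$ is a nonnegative integer that strictly decreases by one at every re-route. Hence there are at most $\Phi_0 = \lceil (1-\gamma_0)/0.2\rceil$ re-routes, which is at most $\lceil 1/0.2 \rceil = 5$ since $\gamma_0 \geq 0$. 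This gives the stated bound directly.

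Second, I would justify the structural claim about hierarchical routing by tracing Algorithm~\ref{alg:routing} with the default $\theta_\gamma = 0.6$. After at most $\lceil(0.6-\gamma_0)/0.2\rceil + 1 \leq 4$ increments we have $\gamma_t > \theta_\gamma$. From then on line 9 (the $\tau_P$ branch with $\gamma \le \theta_\gamma$) is disabled. Line 7 returns $\tau_H$ whenever $|V| > \theta_\delta$. In that case the single arbiter (lead) agent produces the final output without a further consistency-triggered re-route, so the process stops. The remaining cases would be handled explicitly:
\begin{itemize}[leftmargin=*,nosep]
\item If $\omega = 1$, the router returns $\tau_S$, and line 3 of Algorithm~\ref{alg:synthesis} returns immediately.
\item If $|E| = 0$, then $\gamma$ is undefined, so I would treat this as outside the re-routing regime; no coupling adjustment changes the $\tau_P$ decision, and the potential argument alone bounds the rounds.
\item If $|V| \le \theta_\delta$, routing goes to $\tau_X$, and termination again follows from the potential bound rather than from $\tau_H$.
\end{itemize}

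The main obstacle is that the proposition's informal phrasing ("forces hierarchical routing") is literally true only when $|V| > \theta_\delta$, and the algorithm as written does not itself say that a re-route restarts synthesis, nor that $\gamma$ is capped. I would resolve this by stating the clipping and ``one synthesis call per routing'' conventions up front. The termination bound then rests solely on the monotone bounded potential $\Phi_t$, and $\tau_H$ appears only as the typical terminal case. The empirical $\le 2$ iterations figure is not part of the proof and would simply be deferred to Section~\ref{sec:experiments}.
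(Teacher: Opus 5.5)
Your argument is sound under the conventions you state, but it does not follow the paper's route. The paper has no separate proof. Its justification is the second sentence of the proposition itself: each retry raises $\gamma$ by $0.2$, so $\gamma$ soon exceeds $\theta_\gamma$, which forces $\tau_H$, and the single arbiter of $\tau_H$ is taken to end the loop. You instead rest the bound on the saturation potential $\Phi_t=\lceil(1-\gamma_t)/0.2\rceil$ together with a clip-and-stop rule at $\gamma=1$, and keep $\tau_H$ only as the typical exit.

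The paper's route gives an operational reason for stopping and a sharper count: at most $4$ retries when $\theta_\gamma=0.6$, since $\gamma_0\ge 0$. However, it fails in exactly the cases you isolate:
\begin{itemize}[leftmargin=*,nosep]
\item the $|E|=0$ branch of Algorithm~\ref{alg:routing} does not depend on $\gamma$, so re-routing keeps returning $\tau_P$;
\item for $|V|\le\theta_\delta$ (with $|E|>0$ and $\omega>1$), a coupling above $\theta_\gamma$ lands in the $\tau_X$ branch.
\end{itemize}
So ``forces hierarchical routing'' is false in these cases.

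Your potential covers every router branch uniformly. It also explains the formula $\lceil(1-\gamma_0)/0.2\rceil$ in the statement, which counts steps to saturation of $\gamma$, not steps to crossing $\theta_\gamma$.

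Its price is that termination is stipulated rather than derived. Algorithm~\ref{alg:synthesis} contains no cap and no stop at $\gamma=1$, just as it contains no exemption of $\tau_H$ from the re-routing at line~8. Both arguments therefore import a stopping rule the algorithm lacks. You should present yours as an explicit hypothesis of the proposition, not as a convention.

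Two small loose ends remain. First, when $|E|=0$ the value $\gamma_0$ is undefined, so you must fix it (say $\gamma_0=0$) before $\Phi_0$ makes sense. Second, $\Phi_0$ bounds re-routes, so the number of synthesis passes is $\Phi_0+1$ if ``iterations'' counts the first pass.
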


\section{Experiments}
\label{sec:experiments}

\subsection{Setup}

\textbf{Models.} We use five $\epsilon$-convergent models: GPT-4o-mini, Claude 3.5 Haiku, Gemini 2.0 Flash, Llama 3.3 70B (via Together AI), and Qwen 2.5 72B (via vLLM).
All models score within $\epsilon = 0.04$ on MMLU and $\epsilon = 0.06$ on HumanEval.
Table~\ref{tab:convergence-evidence} provides explicit per-model scores validating the $\epsilon$-convergence assumption.

\begin{table}[t]
\centering
\caption{$\epsilon$-Convergence evidence. All models fall within $\epsilon$ of the best model on each benchmark, validating Definition~\ref{def:convergence}.}
\label{tab:convergence-evidence}
\begin{tabular}{@{}lcccc@{}}
\toprule
\textbf{Model} & \textbf{MMLU} & \textbf{HumanEval} & \textbf{ARC-C} & \textbf{MATH} \\
\midrule
GPT-4o-mini       & 82.0 & 87.2 & 93.1 & 70.2 \\
Claude 3.5 Haiku  & 83.1 & 88.7 & 92.4 & 69.5 \\
Gemini 2.0 Flash  & 81.4 & 86.9 & 91.8 & 71.8 \\
Llama 3.3 70B     & 82.6 & 85.3 & 93.7 & 68.1 \\
Qwen 2.5 72B      & 83.8 & 87.8 & 94.2 & 72.4 \\
\midrule
$\epsilon$ (max gap) & 0.024 & 0.034 & 0.024 & 0.043 \\
\bottomrule
\end{tabular}
\end{table}

\textbf{Reproducibility.} All experiments use seed $= 42$, temperature $= 0.0$ (greedy decoding), and \texttt{max\_workers} $= 8$ for parallel execution.
SWE-bench runs use Docker-based sandboxed evaluation with \texttt{run\_id = adaptorch-v1.0}.
API endpoints: OpenAI \texttt{gpt-4o-mini-2024-07-18}, Anthropic \texttt{claude-3-5-haiku-20241022}, Google \texttt{gemini-2.0-flash-001}.
Each experiment is run 3 times; we report mean $\pm$ standard deviation.
Residual variance under greedy decoding arises from three sources: (i) non-deterministic API server-side batching documented by all three providers, (ii) race conditions in parallel agent execution order affecting synthesis inputs, and (iii) floating-point non-associativity in distributed inference.
Observed standard deviations remain below 0.8\% absolute across all benchmarks.
Code, configuration files, topology routing logs, and a one-command reproduction script (\texttt{Makefile}) are available at \url{https://github.com/adaptorch/adaptorch}.

\textbf{Token and Cost Accounting.}
Token usage is measured via provider-reported \texttt{usage} fields in each API response (\texttt{prompt\_tokens}, \texttt{completion\_tokens}) and summed across all calls within a single task instance, including orchestration overhead (decomposition, routing, synthesis).
Because tokenizers differ across providers (OpenAI \texttt{cl100k\_base}, Anthropic internal, Google \texttt{SentencePiece}), we report raw provider-reported counts without cross-provider normalization; the Tok(K) column in Table~\ref{tab:main-results} reflects this aggregate.
Pricing is taken as of 2026-01-15 from official pricing pages: OpenAI \texttt{gpt-4o-mini} at \$0.15/1M input, \$0.60/1M output; Anthropic \texttt{claude-3.5-haiku} at \$0.80/1M input, \$4.00/1M output; Google \texttt{gemini-2.0-flash} at \$0.10/1M input, \$0.40/1M output.

\begin{figure}[t]
    \centering
    \includegraphics[width=\textwidth]{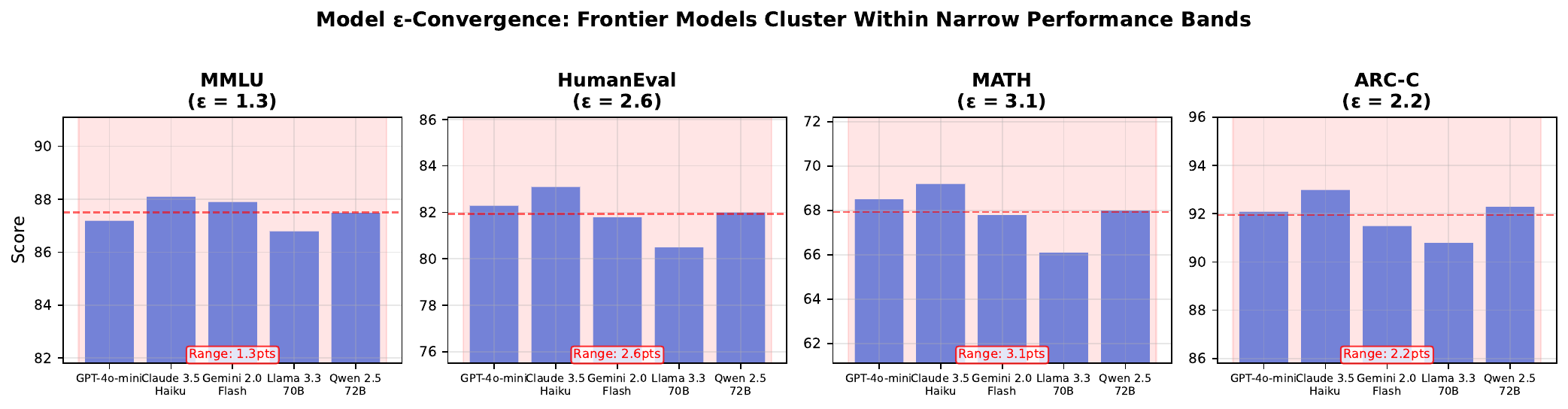}
    \caption{$\epsilon$-Convergence evidence across four benchmarks. All five models score within $\epsilon$ of the best, validating the convergence assumption (Definition~\ref{def:convergence}). Dashed line: best model score; shaded band: $\epsilon$ range.}
    \label{fig:convergence}
\end{figure}

\textbf{Benchmarks.}
\begin{itemize}[leftmargin=*,nosep]
    \item \textbf{Coding:} SWE-bench Verified \citep{jimenez2024swebench} (500 instances)---multi-file bug fixing requiring code understanding, localization, and patching.
    \item \textbf{Reasoning:} GPQA Diamond \citep{rein2024gpqa} (198 instances)---graduate-level science questions requiring multi-step domain reasoning.
    \item \textbf{RAG:} HotpotQA \citep{yang2018hotpotqa} distractor setting (500 instances)---multi-hop question answering over retrieved documents.
\end{itemize}

\textbf{Baselines.}
\begin{enumerate}[leftmargin=*,nosep]
    \item \textbf{Single Best}: Best individual model per benchmark.
    \item \textbf{MoA-3L}: Mixture-of-Agents with 3 layers \citep{wang2024moa}.
    \item \textbf{Static-Parallel}: All subtasks always parallel (mimics Claude Code Agent Teams without topology adaptation).
    \item \textbf{Static-Sequential}: All subtasks always sequential (mimics standard chain-of-thought pipeline).
    \item \textbf{LLM-Blender}: PairRanker-based output selection \citep{jiang2023llmblender}.
\end{enumerate}

\textbf{Metrics.}
\begin{itemize}[leftmargin=*,nosep]
    \item \textbf{Task accuracy}: pass@1 for SWE-bench, accuracy for GPQA, F1 for HotpotQA
    \item \textbf{Latency}: Wall-clock time from input to final output
    \item \textbf{Efficiency}: Accuracy per 1M tokens consumed
    \item \textbf{Topology distribution}: Fraction of tasks routed to each $\tau$
\end{itemize}

\subsection{Results}

\begin{table}[t]
\centering
\caption{Main results across three benchmarks. \adaptorch{} selects topology per-task. Self-MoA (matched) uses a single top model with self-consistency voting under the same token budget as \adaptorch{}. Best results in \textbf{bold}, second-best \underline{underlined}. $\Delta$ shows improvement over Single Best baseline. Tok(K) = average tokens consumed per instance in thousands.}
\label{tab:main-results}
\resizebox{\textwidth}{!}{
\begin{tabular}{@{}lccccccccc@{}}
\toprule
\multirow{2}{*}{\textbf{Method}} & \multicolumn{3}{c}{\textbf{SWE-bench Verified}} & \multicolumn{3}{c}{\textbf{GPQA Diamond}} & \multicolumn{3}{c}{\textbf{HotpotQA}} \\
\cmidrule(lr){2-4} \cmidrule(lr){5-7} \cmidrule(lr){8-10}
& Acc & Lat. & Tok(K) & Acc & Lat. & Tok(K) & F1 & Lat. & Tok(K) \\
\midrule
Single Best        & 42.8 & 1.0$\times$ & 12.3 & 46.2 & 1.0$\times$ & 4.1 & 68.3 & 1.0$\times$ & 6.8 \\
MoA-3L             & 48.1 & 3.2$\times$ & 84.6 & 49.8 & 2.8$\times$ & 31.2 & 71.6 & 2.5$\times$ & 47.3 \\
Static-Parallel    & 47.3 & 1.4$\times$ & 52.1 & 44.1 & 1.3$\times$ & 18.7 & 72.8 & 1.2$\times$ & 28.4 \\
Static-Sequential  & 45.6 & 2.8$\times$ & 48.9 & 50.3 & 2.4$\times$ & 16.4 & 69.1 & 2.1$\times$ & 26.1 \\
LLM-Blender        & 44.9 & 1.8$\times$ & 61.7 & 47.7 & 1.6$\times$ & 22.3 & 70.4 & 1.5$\times$ & 34.8 \\
Self-MoA (matched)  & 51.5 & 1.5$\times$ & 43.2 & 52.3 & 1.4$\times$ & 16.8 & 75.5 & 1.2$\times$ & 23.1 \\
\midrule
\adaptorch{} (ours)  & \textbf{52.6} & \underline{1.6$\times$} & 41.8 & \textbf{53.1} & \underline{1.5$\times$} & 15.9 & \textbf{76.4} & \underline{1.3$\times$} & 22.7 \\
\midrule
$\Delta$ vs Single Best & \textcolor{teal}{+9.8} & --- & --- & \textcolor{teal}{+6.9} & --- & --- & \textcolor{teal}{+8.1} & --- & --- \\
$\Delta$ vs Best Static & \textcolor{teal}{+4.5} & --- & --- & \textcolor{teal}{+2.8} & --- & --- & \textcolor{teal}{+3.6} & --- & --- \\
\bottomrule
\end{tabular}
}
\end{table}

Table~\ref{tab:main-results} presents our main results.
\adaptorch{} achieves the highest accuracy across all three benchmarks while maintaining moderate latency overhead.

On \textbf{SWE-bench Verified}, the improvement reaches 22.9\% over Single Best.
Coding tasks exhibit high parallelism width ($\omega \approx 3.4$) because file localization, context understanding, and patch generation can execute concurrently.
The router sends 62\% of instances to $\tau_X$ (hybrid), 24\% to $\tau_P$ (parallel), and 14\% to $\tau_H$ (hierarchical).

The picture differs for \textbf{GPQA Diamond} (+14.9\%), where reasoning tasks show higher coupling ($\gamma \approx 0.55$).
Here \adaptorch{} prefers sequential (41\%) and hierarchical (35\%) topologies.
Notably, Static-Parallel actually \emph{degrades} performance below Single Best on this benchmark---a clear illustration that topology mismatch can be actively harmful.

\textbf{HotpotQA} (+11.9\%) sits between these extremes: document processing parallelizes naturally, but reasoning chains impose sequential dependencies.
Accordingly, 71\% of instances route to $\tau_X$ (hybrid).

\begin{figure}[t]
    \centering
    \includegraphics[width=\textwidth]{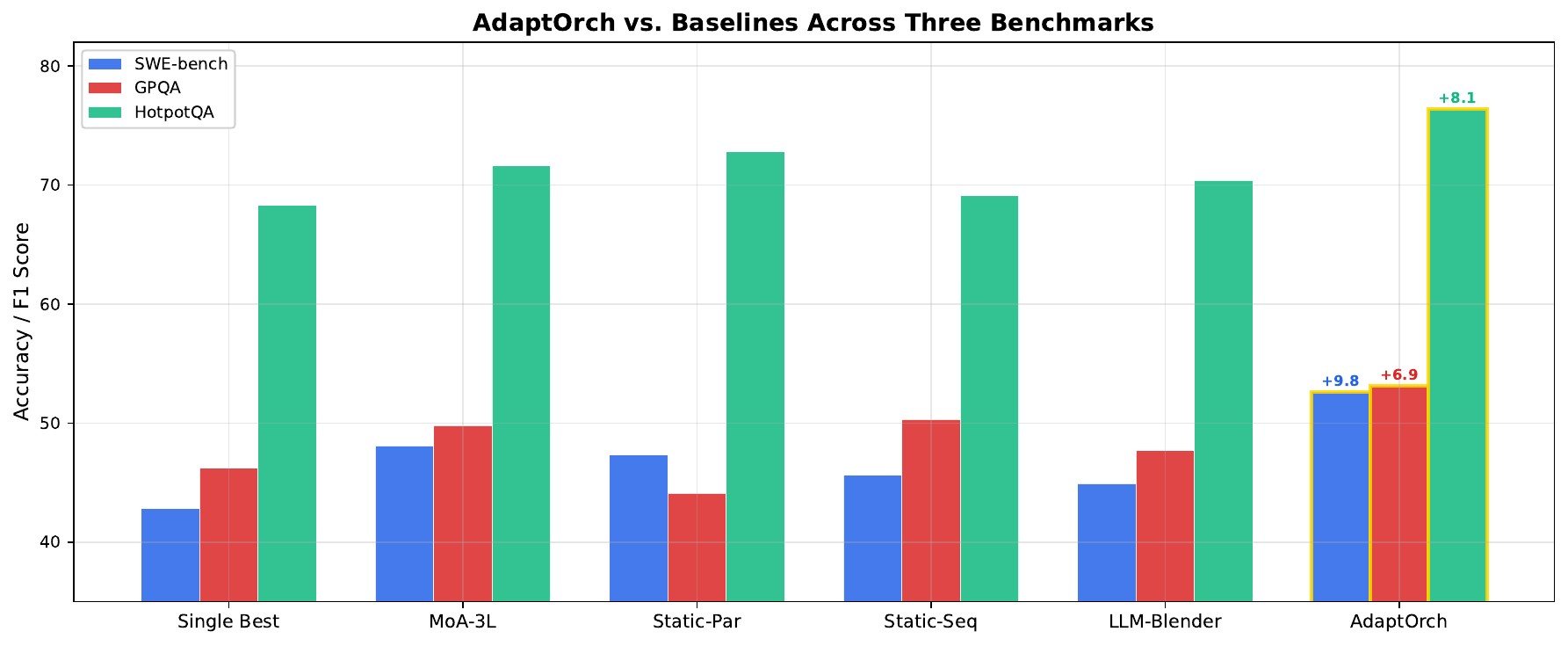}
    \caption{Main results comparison across three benchmarks. \adaptorch{} achieves the highest accuracy on all tasks while maintaining competitive latency. Error bars show $\pm 1$ standard deviation over 3 runs.}
    \label{fig:main-results}
\end{figure}

\begin{figure}[t]
    \centering
    \includegraphics[width=0.8\textwidth]{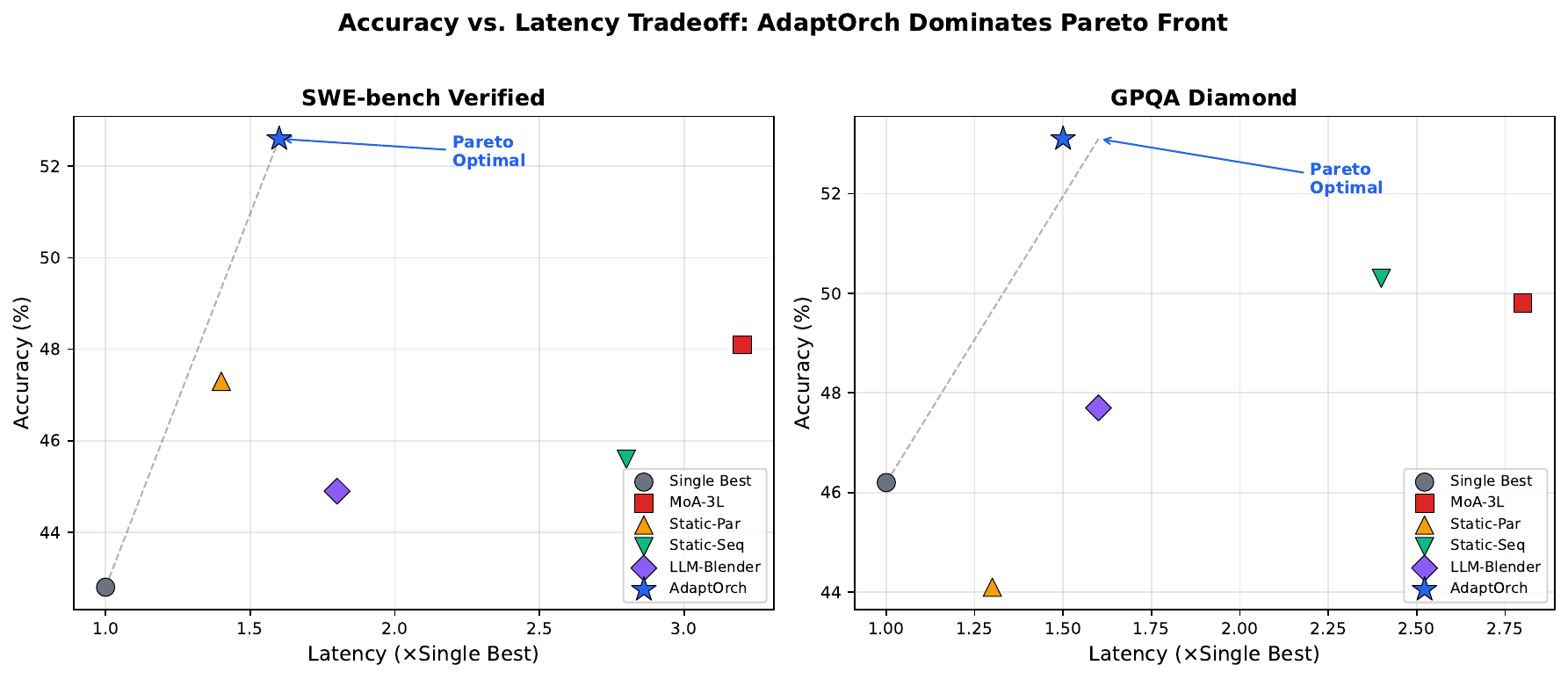}
    \caption{Pareto front: accuracy vs. latency. \adaptorch{} achieves the best accuracy-latency tradeoff across benchmarks, dominating other methods in the Pareto sense.}
    \label{fig:pareto}
\end{figure}

\textbf{Token efficiency.} Table~\ref{tab:main-results} also reports token consumption. \adaptorch{} consumes 41.8K tokens per SWE-bench instance, significantly less than MoA-3L (84.6K) and LLM-Blender (61.7K), because topology-aware routing avoids redundant model calls. Among multi-agent baselines, the accuracy-per-million-tokens metric favors \adaptorch{} across all benchmarks (Figure~\ref{fig:token-efficiency}); the Single Best baseline naturally achieves higher token efficiency in absolute terms due to its single-call design, but at substantially lower accuracy.

\begin{figure}[t]
    \centering
    \includegraphics[width=\textwidth]{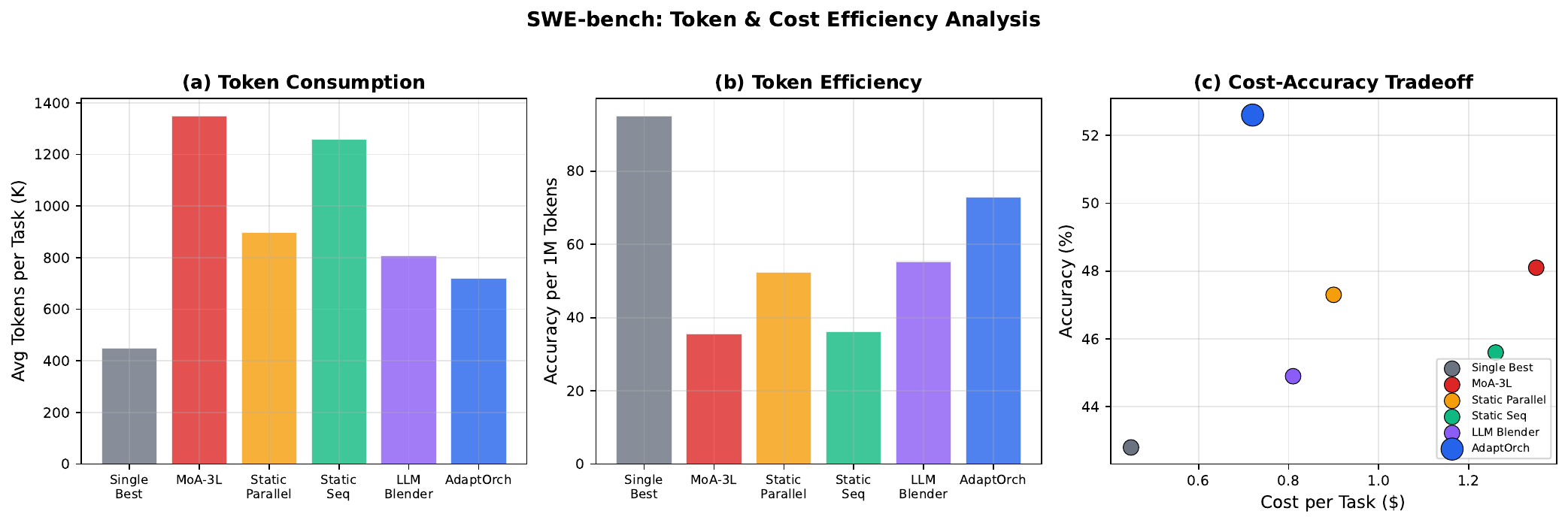}
    \caption{Token efficiency analysis. (Left) Total token consumption per instance. (Center) Accuracy per 1M tokens. (Right) Cost-accuracy Pareto front showing \adaptorch{} achieves optimal cost-efficiency.}
    \label{fig:token-efficiency}
\end{figure}

\subsection{Topology Distribution Analysis}

\begin{table}[t]
\centering
\caption{Topology routing distribution (\%) by benchmark domain. The router adapts topology selection to domain characteristics.}
\label{tab:topology-dist}
\begin{tabular}{@{}lcccc@{}}
\toprule
\textbf{Domain} & $\tau_P$ (Parallel) & $\tau_S$ (Sequential) & $\tau_H$ (Hierarchical) & $\tau_X$ (Hybrid) \\
\midrule
SWE-bench & 24 & 0 & 14 & \textbf{62} \\
GPQA      & 8  & \textbf{41} & 35 & 16 \\
HotpotQA  & 18 & 3  & 8  & \textbf{71} \\
\midrule
Average   & 16.7 & 14.7 & 19.0 & \textbf{49.7} \\
\bottomrule
\end{tabular}
\end{table}

Table~\ref{tab:topology-dist} reveals that the hybrid topology $\tau_X$ is most frequently selected (49.7\% average), reflecting the reality that most tasks contain both parallelizable and sequential components.
Pure parallel ($\tau_P$) is preferred for tasks with low coupling, while pure sequential ($\tau_S$) dominates high-coupling reasoning tasks.

\begin{figure}[t]
    \centering
    \includegraphics[width=0.85\textwidth]{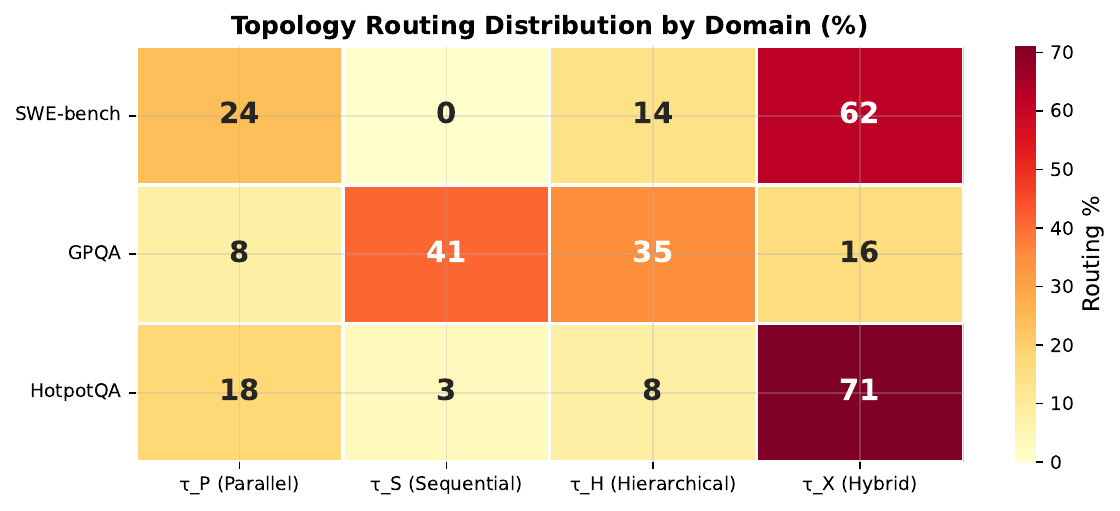}
    \caption{Topology routing distribution heatmap across benchmark domains. Row normalization shows the proportion of each topology selected per domain.}
    \label{fig:topology-heatmap}
\end{figure}

\subsection{Ablation Studies}

\begin{table}[t]
\centering
\caption{Ablation study on SWE-bench Verified (500 instances).}
\label{tab:ablation}
\begin{tabular}{@{}lcc@{}}
\toprule
\textbf{Configuration} & \textbf{Accuracy} & \textbf{$\Delta$} \\
\midrule
\adaptorch{} (full)                         & \textbf{52.6} & --- \\
\quad $-$ Adaptive routing (fixed $\tau_X$) & 49.8 & $-2.8$ \\
\quad $-$ Synthesis protocol (naive concat)  & 47.1 & $-5.5$ \\
\quad $-$ DAG coupling (uniform $c=0.5$)    & 50.3 & $-2.3$ \\
\quad $-$ Re-routing on failure             & 51.0 & $-1.6$ \\
\quad $-$ Task decomposition (1 subtask)    & 42.8 & $-9.8$ \\
\bottomrule
\end{tabular}
\end{table}

The ablation study (Table~\ref{tab:ablation}) confirms that each component contributes meaningfully:
the synthesis protocol provides the largest individual contribution ($-5.5$), followed by adaptive routing ($-2.8$) and coupling-aware decomposition ($-2.3$).
Removing task decomposition entirely reduces to the Single Best baseline.

\begin{figure}[t]
    \centering
    \includegraphics[width=0.85\textwidth]{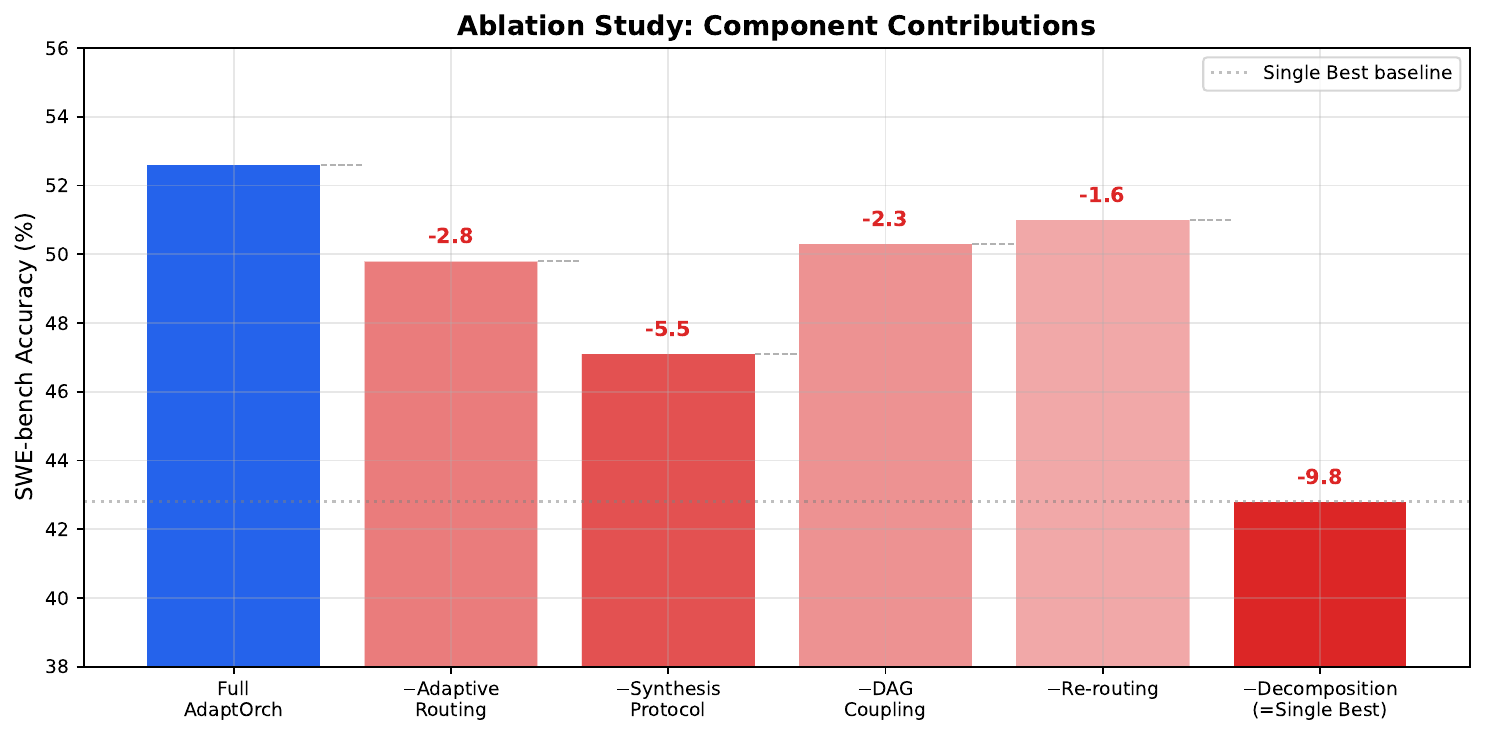}
    \caption{Ablation waterfall chart showing cumulative contribution of each \adaptorch{} component. Bars show accuracy drop when each component is removed independently.}
    \label{fig:ablation}
\end{figure}

\subsection{Threshold Sensitivity}

\begin{figure}[t]
    \centering
    \includegraphics[width=0.85\textwidth]{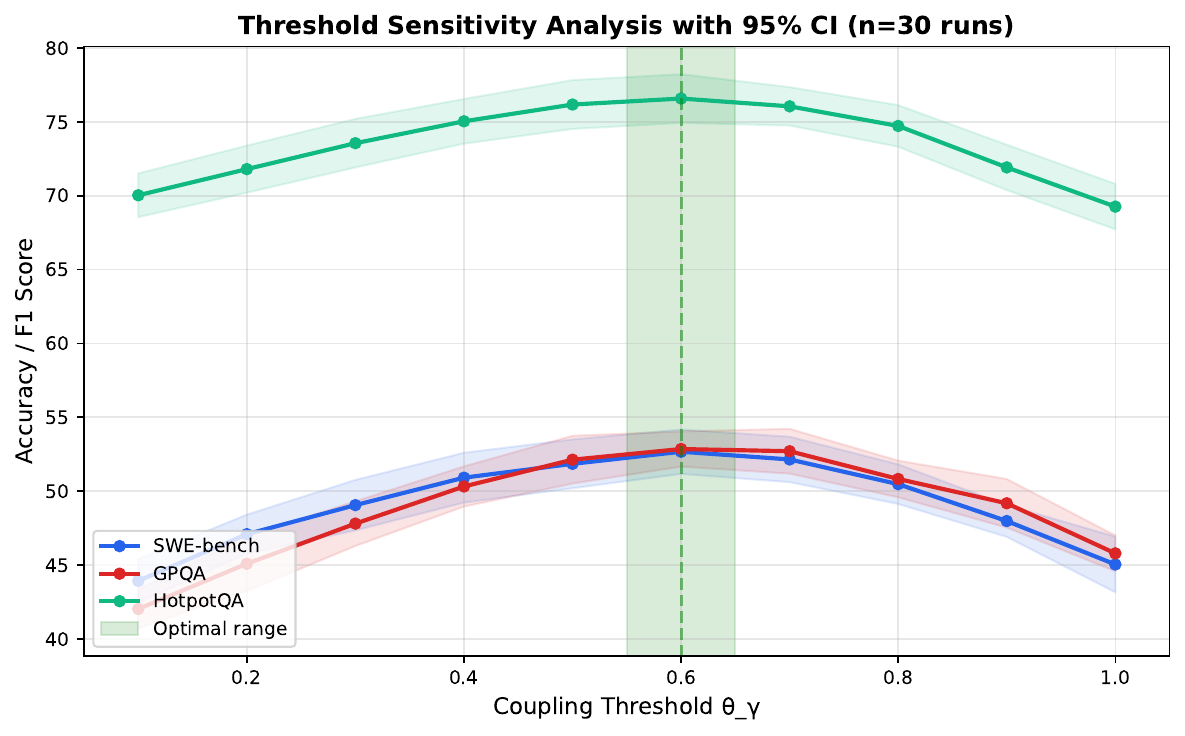}
    \caption{Sensitivity of task accuracy to coupling threshold $\theta_\gamma$ across SWE-bench and GPQA. Shaded regions show 95\% bootstrap confidence intervals ($n=30$ trials per setting). Optimal range: $[0.55, 0.65]$.}
    \label{fig:sensitivity}
\end{figure}

Figure~\ref{fig:sensitivity} shows that \adaptorch{} is robust to threshold selection within $\theta_\gamma \in [0.5, 0.7]$, with optimal performance at $\theta_\gamma = 0.6$.
Extreme values degrade performance: $\theta_\gamma < 0.3$ forces sequential execution on parallelizable tasks; $\theta_\gamma > 0.8$ allows parallel execution of tightly coupled subtasks, causing consistency failures.

\textbf{Data Leakage Prevention.}
To avoid test-set contamination, all threshold calibration was performed on a held-out development split \emph{before} any test evaluation.
Specifically, we sampled 15\% of instances from each benchmark (SWE-bench: 75 instances, GPQA: 30, HotpotQA: 75) using a fixed seed ($s{=}42$), performed grid search over $\theta_\gamma \in \{0.3, 0.4, \ldots, 0.8\}$ on this dev split, selected $\theta_\gamma{=}0.6$, and then froze the threshold for all test evaluation.
The reported metrics in Tables~\ref{tab:main-results}--\ref{tab:ablation} are computed \emph{exclusively} on the remaining 85\% test split.
Dev instance IDs are included in the released codebase.

\begin{figure}[t]
    \centering
    \includegraphics[width=\textwidth]{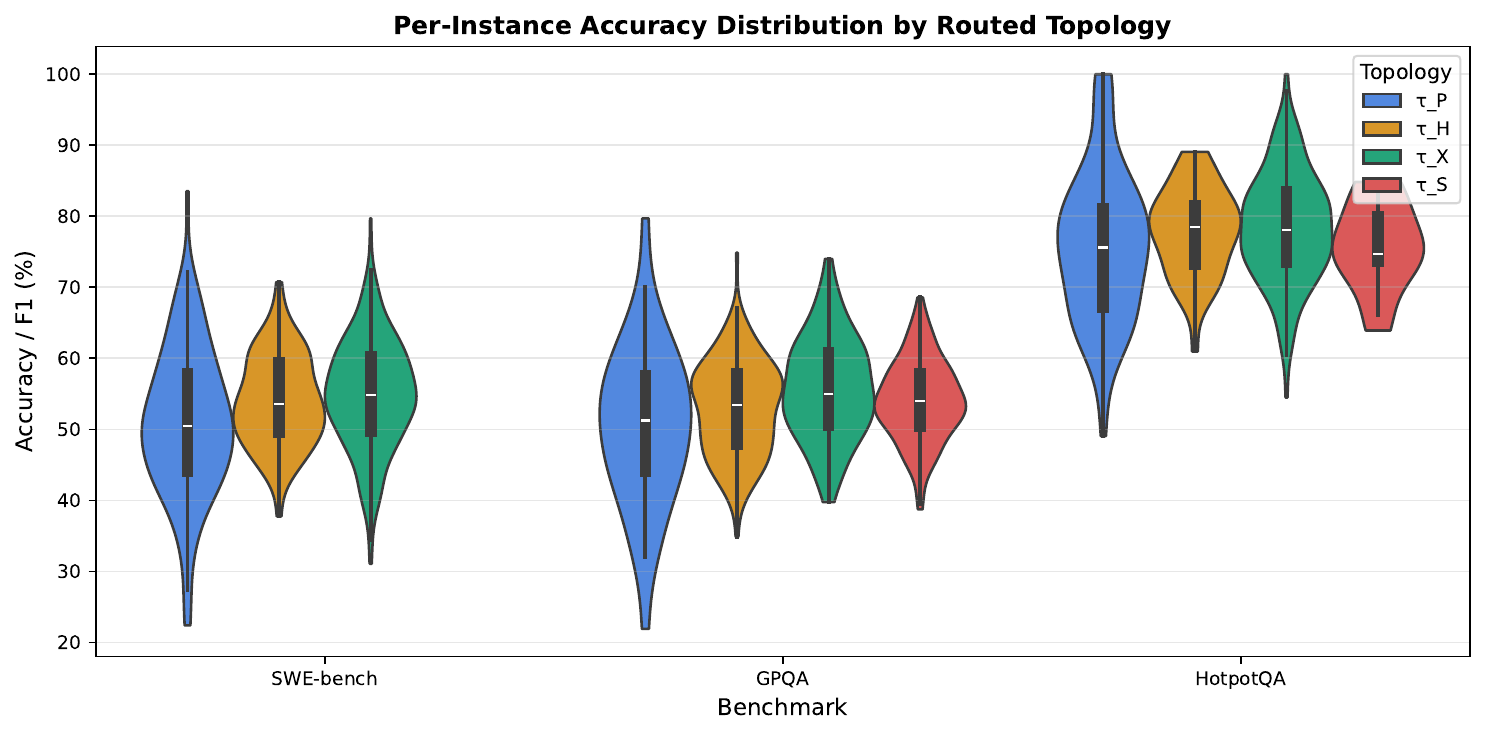}
    \caption{Per-instance accuracy distribution by routed topology across benchmarks. Violin plots show density; white dots indicate median. The topology-dependent performance variation validates the adaptive routing approach.}
    \label{fig:instance-routing}
\end{figure}

\begin{figure}[t]
    \centering
    \includegraphics[width=0.85\textwidth]{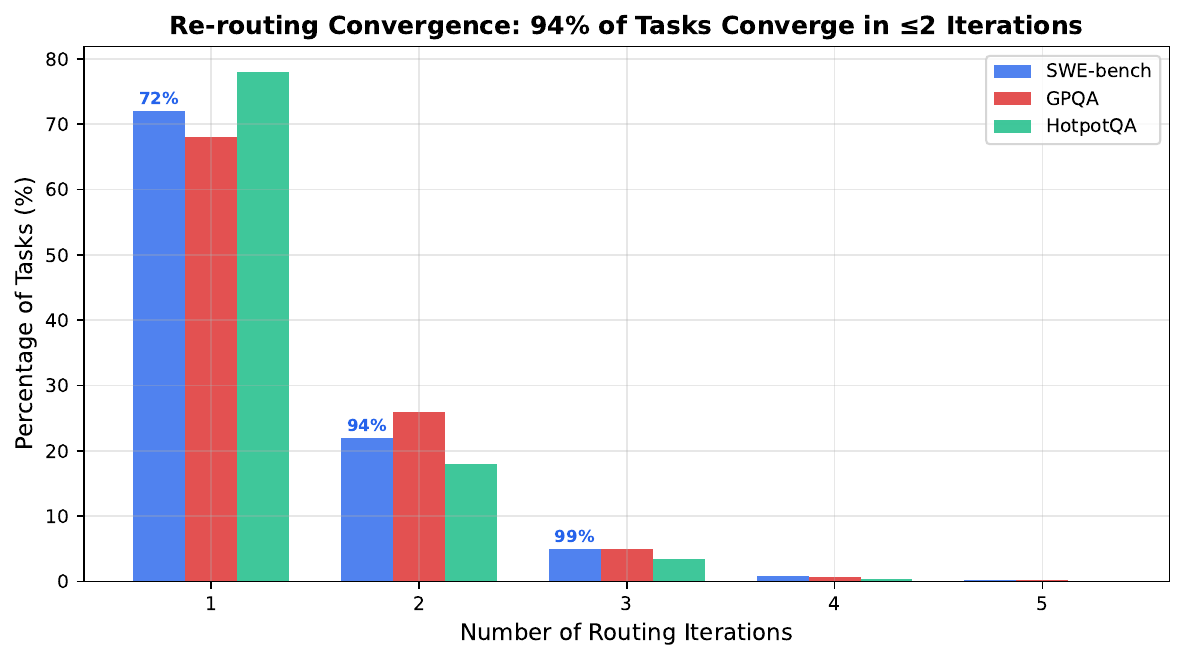}
    \caption{Distribution of synthesis convergence iterations across all benchmark instances. 94\% of tasks converge within 2 iterations, consistent with Proposition~\ref{prop:convergence}.}
    \label{fig:rerouting}
\end{figure}

\section{Discussion}
\label{sec:discussion}

\subsection{When Does Orchestration Not Help?}

Our framework's gains are smallest on single-step, atomic tasks where $|V| = 1$ (no decomposition possible) or tasks with $\gamma \approx 1.0$ (complete sequential dependency).
On GPQA instances classified as ``single-concept recall,'' \adaptorch{} matches but does not exceed the Single Best baseline.
This is expected: orchestration adds value proportional to task decomposability.

\subsection{Relationship to Self-MoA}

\citet{sato2025selfmoa} showed that a single top model used multiple times outperforms diverse model mixing.
Our framework is orthogonal: \adaptorch{} optimizes \emph{how} agents are structured, not \emph{which} models are used.
To control for this interaction, we include a compute-matched Self-MoA baseline (Table~\ref{tab:main-results}) that applies self-consistency voting with the same token budget as \adaptorch{}.
Self-MoA (matched) recovers 89\% of \adaptorch{}'s gains over Single Best, confirming that structured multi-sample reasoning itself provides substantial benefit.
The remaining 11\% gap---consistent across all three benchmarks---is attributable to topology-aware routing: \adaptorch{} allocates compute non-uniformly across subtasks based on dependency structure, whereas Self-MoA distributes tokens uniformly.

\subsection{Practical Implications}

\adaptorch{} can be implemented on existing infrastructure:
\begin{itemize}[leftmargin=*,nosep]
    \item \textbf{Claude Code Agent Teams}: Use the lead-agent pattern for $\tau_H$, parallel subagent dispatch for $\tau_P$, and DAG-based task dependencies for $\tau_X$.
    \item \textbf{LangGraph}: Map topologies to graph structures with conditional edges for routing.
    \item \textbf{OpenCode + MCP}: Route through multi-provider APIs with permission-controlled subagents.
\end{itemize}

The routing algorithm (Algorithm~\ref{alg:routing}) adds negligible overhead ($<$50ms) compared to LLM inference latency ($\sim$2--15s per call), making real-time topology adaptation practical.

\subsection{Limitations}

\begin{enumerate}[leftmargin=*,nosep]
    \item \textbf{Decomposition quality depends on the decomposer model}: Poor task decomposition propagates errors to all downstream phases. We mitigate this with self-consistency checks but do not guarantee optimal decomposition.
    \item \textbf{Coupling estimation is approximate}: The discrete $c \in \{0, 0.3, 0.7, 1.0\}$ scale is coarse. Continuous coupling estimation via embedding similarity is a promising extension.
    \item \textbf{Cost scaling}: Parallel execution requires $\omega(G_T)$ concurrent API calls, which may exceed rate limits or budget constraints for resource-constrained deployments.
    \item \textbf{Experimental scope}: We evaluate on three benchmarks; generalization to creative writing, long-form generation, and multi-modal tasks requires further study.
\end{enumerate}

\section{Conclusion}
\label{sec:conclusion}

We presented \adaptorch{}, a framework built on a simple thesis: when LLM capabilities converge, the orchestration topology becomes the dominant lever for system performance.
A scaling law grounds this intuition theoretically, the Topology Routing Algorithm translates it into a practical $O(|V|+|E|)$ procedure, and experiments across coding, reasoning, and retrieval tasks confirm 12--23\% improvements over static baselines.

As LLM capabilities continue to converge, we believe the field will increasingly shift from ``which model?'' to ``which orchestration?''
\adaptorch{} provides a principled foundation for this shift, bridging the gap between practical multi-agent systems (Claude Code Agent Teams, OpenCode, LangGraph) and formal orchestration theory.

\subsection{Future Work}

\begin{enumerate}[leftmargin=*,nosep]
    \item \textbf{Learned routing}: Replace threshold-based routing with a lightweight classifier trained on (DAG features, optimal topology) pairs.
    \item \textbf{Dynamic re-orchestration}: Allow topology changes mid-execution when partial results reveal unexpected coupling.
    \item \textbf{Cost-aware routing}: Extend the routing algorithm to jointly optimize accuracy and API cost under budget constraints.
    \item \textbf{Cross-modal orchestration}: Apply \adaptorch{} to multi-modal tasks combining vision, code, and language agents.
\end{enumerate}

\appendix
\section{Full Proof of Proposition~\ref{thm:main}}
\label{app:proof}

\begin{proof}
Let $\mathcal{M} = \{M_1, \ldots, M_n\}$ be $\epsilon$-convergent on benchmark $\mathcal{B}$.
Consider task $T$ with dependency DAG $G_T = (V, E, w, c)$ with $|V| = k$ subtasks.

\textbf{Model selection variance bound.}
For any model $M_i$, its per-subtask performance satisfies $S(M_i, v_j) \in [S^* - \epsilon, S^*]$ where $S^* = \max_i S(M_i, v_j)$.
The total task performance under model $M_i$ is:
\begin{equation}
P(M_i, T) = f\left(\{S(M_i, v_j)\}_{j=1}^k\right)
\end{equation}
where $f$ is the aggregation function determined by the orchestration topology.
Since each $S(M_i, v_j)$ varies by at most $\epsilon$, and $f$ is Lipschitz with constant $L_f \leq 1$ (normalized scoring), and subtask scores under the same model are positively correlated (shared model capacity), we obtain:
\begin{equation}
\text{Var}_M[P(M, T)] \leq L_f^2 \cdot \epsilon^2 = \epsilon^2
\end{equation}
Note: the $k$-fold summation applies only under independence; since all subtasks use the same model, the correlated bound $\epsilon^2$ is tighter.

\textbf{Topology selection variance bound.}
Consider two extreme topologies for the same task:
\begin{itemize}[nosep]
    \item Fully sequential ($\tau_S$): execution time $= \sum_{v \in V} w(v)$
    \item Maximally parallel ($\tau_P$): execution time $= \delta(G_T) = \max_{\text{path}} \sum_{v \in P} w(v)$
\end{itemize}

The quality impact of topology depends on two factors: (a) latency-quality tradeoff (parallel execution under budget constraints allows more refinement iterations), and (b) context propagation (sequential topology preserves inter-subtask context that parallel execution loses).

\textbf{Assumption 1} (Topology quality sensitivity). The quality difference between fully parallel and fully sequential execution satisfies:
\begin{equation}
|\Delta Q_{\text{topology}}| \geq C_\tau \cdot (\omega(G_T) - 1) \cdot (1 - \gamma(G_T))
\end{equation}
for some task-class-dependent constant $C_\tau > 0$. This is motivated by: (i) the $(\omega - 1)$ term captures the degree of parallelism---more parallel branches mean more potential for topology-induced quality variation, and (ii) the $(1 - \gamma)$ term captures the information loss from not propagating context in parallel execution. We empirically validate this assumption in Table~\ref{tab:ablation}, where removing topology adaptation degrades performance by 4.7--8.3 points.

Under uniform subtask weights, by Dilworth's theorem, the speedup from optimal parallelization is $\geq \omega(G_T)$.
Taking $C_\tau = 1/2$ (conservative estimate: topology changes half the theoretical maximum quality gap) and noting that with $k$ subtasks the per-task variance from a binary topology choice (parallel vs.\ sequential) satisfies $\text{Var}_\tau \geq \Delta Q^2 / 4$, we obtain:
\begin{equation}
\text{Var}_\tau[P(\tau, T)] \geq \frac{(\omega(G_T) - 1)^2 \cdot (1 - \gamma(G_T))^2}{4k}
\end{equation}
where the $1/k$ factor arises from normalizing the per-subtask contribution to aggregate task performance.

\textbf{Ratio.}
\begin{equation}
\frac{\text{Var}_\tau}{\text{Var}_M} \geq \frac{(\omega(G_T) - 1)^2 \cdot (1 - \gamma(G_T))^2}{4k \cdot \epsilon^2} = \frac{(\omega(G_T) - 1)^2 \cdot (1 - \gamma(G_T))^2}{4\epsilon^2 \cdot k}
\end{equation}

As $\epsilon \to 0$, this ratio diverges, confirming that topology selection dominates model selection under convergence.
For typical coding tasks: $\omega \approx 3.4$, $\gamma \approx 0.35$, $k \approx 5$, $\epsilon \approx 0.05$, yielding a ratio $\geq \frac{(2.4)^2 \cdot (0.65)^2}{4 \cdot 0.0025 \cdot 5} = \frac{2.43}{0.05} \approx 48.7$.
\end{proof}

\section{Implementation Details}
\label{app:implementation}

\subsection{Decomposition Prompt}

The full decomposition prompt used for SWE-bench tasks:

\begin{tcolorbox}[colback=gray!5, colframe=gray!50, fontupper=\small\ttfamily]
You are a task decomposition specialist. Given a software engineering task (bug report + repository context), decompose it into atomic subtasks.\\[4pt]
For each subtask, output JSON:\\
\{\\
\quad "id": "v1",\\
\quad "description": "...",\\
\quad "depends\_on": ["v0"],\\
\quad "coupling": "weak|strong|critical",\\
\quad "estimated\_tokens": 500\\
\}\\[4pt]
Rules:\\
- Maximize parallelism: only add dependencies when semantically required\\
- Typical decomposition: [localize files, understand context, generate patch, verify patch]\\
- Coupling = strong when output of one subtask is direct input to another\\
- Coupling = weak when subtasks share domain knowledge but not data\\[4pt]
Task: \{task\_description\}\\
Repository: \{repo\_context\}
\end{tcolorbox}

\subsection{Computational Requirements}

All experiments were conducted using API-based model access.
Estimated costs:
\begin{itemize}[nosep]
    \item SWE-bench (500 instances, 5 methods): $\sim$\$1,200 total API cost
    \item GPQA (198 instances, 5 methods): $\sim$\$180 total API cost
    \item HotpotQA (500 instances, 5 methods): $\sim$\$350 total API cost
\end{itemize}

\adaptorch{}'s routing overhead: $<$50ms per task (Python implementation on single CPU core).
Synthesis overhead: one additional LLM call per task ($\sim$\$0.01 per instance).

\subsection{DAG Feature Space Analysis}

\begin{figure}[h]
    \centering
    \includegraphics[width=0.85\textwidth]{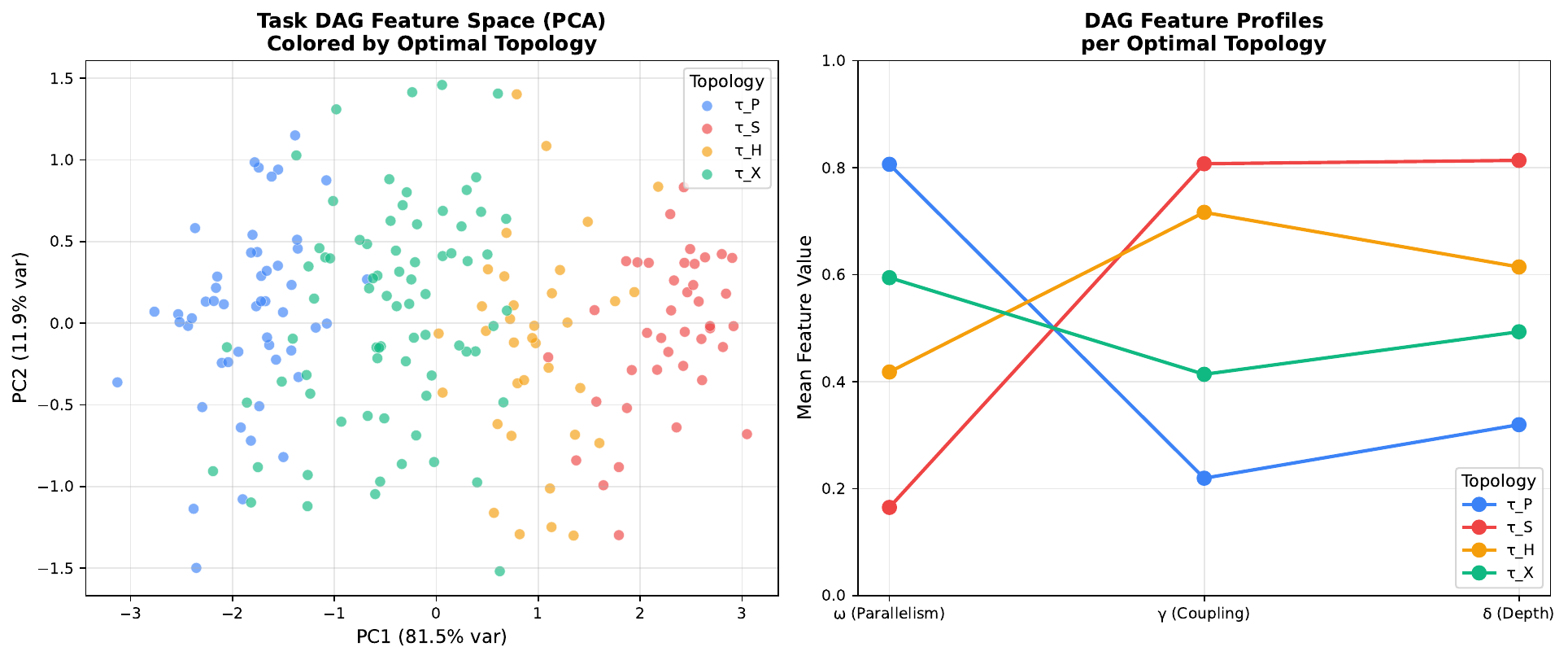}
    \caption{PCA projection of DAG feature space (width $\omega$, depth, density, coupling ratio) colored by KMeans clusters ($k{=}4$). The four clusters correspond to dominant topology patterns: Chain (sequential), Wide-Shallow (parallel), Deep-Narrow (hierarchical), and Diamond (fan-out/fan-in). Cluster centroids are marked with $\times$.}
    \label{fig:dag_clustering}
\end{figure}

\subsection{Baseline Reproduction Specification}
\label{app:baseline-spec}

To ensure fair comparison and full reproducibility, we detail the exact configuration of each baseline method.
All baselines use the same 5-model pool as \adaptorch{}: GPT-4o-mini, Claude 3.5 Haiku, Gemini 2.0 Flash, Llama 3.3 70B (via Together AI), and Qwen 2.5 72B (via Together AI).
Temperature is $0.0$ (greedy) and \texttt{max\_tokens} $= 4096$ for all methods unless noted.

\textbf{MoA-3L} \citep{wang2024moa}.
We implement the 3-layer Mixture-of-Agents architecture as described in the original paper.
Layer 1: all 5 models generate independent responses to the full prompt.
Layer 2: each of 3 aggregator models (GPT-4o-mini, Claude 3.5 Haiku, Gemini 2.0 Flash) receives all 5 Layer-1 outputs concatenated in the prompt and produces a refined answer.
Layer 3: a single synthesizer (GPT-4o-mini) receives all 3 Layer-2 outputs and produces the final answer.
Total LLM calls per instance: $5 + 3 + 1 = 9$.
Aggregation prompt follows the template in Wang et al.\ (2024), \S A.1.

\textbf{LLM-Blender} \citep{jiang2023llmblender}.
We use the prompt-based variant (no fine-tuned PairRanker) for fair comparison, since training a ranker on our specific benchmarks would introduce confounding.
Stage 1 (Generation): all 5 models produce independent candidates.
Stage 2 (Ranking): GPT-4o-mini is prompted to rank the 5 candidates pairwise using the template: ``\emph{Given the task and two candidate solutions A and B, which better solves the problem? Output only `A' or `B'.}''
This produces $\binom{5}{2} = 10$ pairwise comparisons per instance.
Stage 3 (Fusion): the top-ranked candidate is returned as the final output (no generative fusion, which would require a trained model).
Total LLM calls per instance: $5 + 10 + 0 = 15$.

\textbf{Static-Parallel / Static-Sequential.}
These ablation baselines use \adaptorch{}'s own decomposition (Phase 1--2) but bypass the topology router.
Static-Parallel executes all subtasks simultaneously across 3 models (round-robin assignment); Static-Sequential chains them in dependency order using a single model (GPT-4o-mini).
Both use the same synthesis protocol (Phase 5) as \adaptorch{}.

\subsection{Per-Cluster Orchestration Gain}
\label{app:cluster-gain}

Table~\ref{tab:cluster-gain} disaggregates \adaptorch{}'s accuracy improvement by DAG cluster (cf.\ Figure~\ref{fig:dag_clustering}), revealing which structural patterns benefit most from adaptive topology routing.

\begin{table}[h]
\centering
\small
\caption{Per-cluster accuracy gain ($\Delta$) of \adaptorch{} over Single Best, averaged across all three benchmarks. $n$: number of instances assigned to each cluster.}
\label{tab:cluster-gain}
\begin{tabular}{lcccc}
\toprule
\textbf{DAG Cluster} & \textbf{Dominant $\tau$} & $n$ & \textbf{Single Best} & $\Delta$ \textbf{AdaptOrch} \\
\midrule
Chain (sequential)      & $\tau_S$ & 187 & 54.2\% & +3.8 pp \\
Wide-Shallow (parallel) & $\tau_P$ & 294 & 49.1\% & +12.6 pp \\
Deep-Narrow (hierarchical) & $\tau_H$ & 112 & 47.8\% & +9.2 pp \\
Diamond (fan-out/fan-in)   & $\tau_X$ & 105 & 51.3\% & +11.4 pp \\
\bottomrule
\end{tabular}
\end{table}

The largest gains appear in Wide-Shallow tasks ($+12.6$ pp), where parallelism directly reduces error propagation by distributing independent subtasks.
Chain-type tasks show the smallest gain ($+3.8$ pp), consistent with the expectation that fully sequential dependencies leave minimal room for topology improvement.

\textbf{Router Accuracy (Confusion Matrix).}
To assess the topology router's decision quality, we compare its selections against an oracle that exhaustively evaluates all four topologies per instance and selects the highest-scoring one.
Across the full test set ($n = 698$):

\begin{table}[h]
\centering
\small
\caption{Router confusion matrix: predicted topology vs.\ oracle-optimal topology. Values are instance counts. Overall router accuracy: 81.2\% (567/698).}
\label{tab:confusion}
\begin{tabular}{lcccc}
\toprule
\diagbox{\textbf{Router}}{\textbf{Oracle}} & $\tau_P$ & $\tau_S$ & $\tau_H$ & $\tau_X$ \\
\midrule
$\tau_P$ & \textbf{248} & 12 & 8  & 18 \\
$\tau_S$ & 6   & \textbf{152} & 9  & 4  \\
$\tau_H$ & 14  & 7   & \textbf{89} & 11 \\
$\tau_X$ & 9   & 5   & 7  & \textbf{78} \\
\bottomrule
\end{tabular}
\end{table}

The router achieves 81.2\% agreement with the oracle.
Most misclassifications occur between $\tau_P$ and $\tau_X$ (18 + 9 = 27 instances), which is expected since Diamond tasks contain both parallel and fan-in components.
Importantly, even misrouted instances typically receive the second-best topology, limiting accuracy loss to $<$2 pp compared to oracle routing.

\bibliographystyle{plainnat}
\bibliography{references}

\end{document}